\documentclass{article}

\usepackage{graphicx}
\usepackage{float}
\usepackage{amsmath}
\usepackage{amsthm}
\usepackage{amssymb}
\usepackage{authblk}
\usepackage{url}
\usepackage[noadjust]{cite}
\usepackage[hidelinks]{hyperref}
\usepackage{cleveref}
\usepackage{fullpage}

\newtheorem{theorem}{Theorem}
\newtheorem{lemma}[theorem]{Lemma}
\newtheorem{definition}[theorem]{Definition}

\newtheorem{corollary}[theorem]{Corollary}

\newtheorem{observation}[theorem]{Observation}

\usepackage{xcolor}

\newcommand{\prob}[1]{\textsf{\textsc{#1}}}

\newcommand{\re}{\mathit{RE}}
\newcommand{\sre}{\mathit{SRE}}
\newcommand{\fnpos}{\mathrm{pos}}

\renewcommand{\S}[1]{\mathsf{S}_{#1}}
\newcommand{\T}[1]{\mathsf{T}_{#1}}
\newcommand{\A}[1]{\mathsf{A}_{#1}}
\newcommand{\B}[1]{\mathsf{B}_{#1}}
\newcommand{\X}[1]{x_{#1}}
\newcommand{\Eq}{\mathsf{Eq}}
\newcommand{\Ch}{\mathsf{Ch}}
\newcommand{\EqS}{\Eq_{\mathsf{S}}}
\newcommand{\EqT}{\Eq_{\mathsf{T}}}
\newcommand{\EqA}{\Eq_{\mathsf{A}}}
\newcommand{\EqB}{\Eq_{\mathsf{B}}}
\newcommand{\EqX}{\Eq_{\mathsf{X}}}
\newcommand{\EqXv}[1]{\Eq_{\mathsf{X}}(#1)}

\title{String Representation Based on Substring Equation Systems}

\author[1]{Hiroki Shibata\thanks{\texttt{shibata.hiroki.753@s.kyushu-u.ac.jp}}}
\author[2]{Hideo Bannai\thanks{\texttt{bannai.h.557a@m.isct.ac.jp}}}

\affil[1]{Joint Graduate School of Mathematics for Innovation, Kyushu University, Japan}
\affil[2]{M\&D Data Science Center, Institute of Integrated Research, Institute of Science Tokyo, Japan}

\begin{document}

\maketitle

\begin{abstract}
Repetitiveness measures quantify how much repetitive structure a string contains and serve as parameters for compressed representations and indexing data structures.
Many compression schemes represent strings by recording equalities between identical substrings.
We introduce the \emph{substring equation system} (SES), a general compression scheme that represents a string as the unique solution to substring-equality and character-assignment constraints.
We show that every string $w$ has an SES of size $O(\chi(w))$, where $\chi(w)$ is the size of its smallest \emph{suffixient set}.
This result establishes the reachability of $\chi$, which had been an open problem.
We also prove that computing the size $s(w)$ of the smallest SES that represents $w$ is NP-hard and $(1 + \varepsilon)$-inapproximable for some fixed constant $\varepsilon > 0$.
Finally, we prove that the size $b(w)$ of the smallest bidirectional macro scheme (BMS) representing $w$ satisfies $s(w) \leq b(w) \leq 4s(w)$.
Hence, SES and BMS are equivalent up to a constant factor, and this equivalence gives the new bound $b(w) \in O(\chi(w))$.
\end{abstract}

\section{Introduction}

Repetitiveness measures aim to quantify how much repetitive structure a string contains.
They offer a unified way to compare strings and to reason about the space usage of compressed representations.
Beyond repetitiveness measures that arise directly from compression techniques, such as LZ-style parsing~\cite{DBLP:journals/tit/ZivL77} and the run-length encoded Burrows--Wheeler transform~\cite{Burrows1994ABL}, repetitiveness measures that are not derived from any specific compression technique have been proposed, notably string attractors~\cite{DBLP:conf/stoc/KempaP18} and normalized substring complexity~\cite{DBLP:conf/latin/KociumakaNP20}.
These measures are widely used to analyze the size of existing compressed representations and indices, and they provide convenient parameters for stating space bounds for both representations and indexing data structures~\cite{DBLP:journals/csur/Navarro21a}.

Many compression schemes represent strings compactly by capturing equalities between identical substrings.
LZ-style parsings~\cite{DBLP:journals/tit/ZivL77} are a fundamental example of this approach.
Among general representation models based on substring references, the bidirectional macro scheme (BMS)~\cite{DBLP:conf/stoc/StorerS78} captures a broad class of compression schemes.
A BMS partitions the string into phrases and stores each phrase either explicitly as a character or by a directed reference to an identical substring.
Although BMS is known to be theoretically stronger than many other compression schemes, it has several structural restrictions.
Specifically, its references must be acyclic, each copied phrase has only one source, and its phrase boundaries form a partition of the text.
These restrictions must be handled when constructing BMSs for string families, designing algorithms for BMS, and establishing properties of BMS.

In this paper, we introduce the \emph{substring equation system} (SES), a general compression scheme based on substring equalities.
An SES represents a string as the unique solution to a collection of substring-equality constraints and character-assignment constraints.
SESs have been studied in the context of string reconstruction~\cite{DBLP:journals/tcs/GawrychowskiKRR20},
but previous work has not studied their size as a repetitiveness measure or their use as a framework for compressing strings.
The definition directly implies that every BMS can be converted into an SES of the same size.
This conversion directly gives $s(w) \leq b(w)$ for every string $w$, where $b(w)$ is the size of the smallest BMS representing $w$ and $s(w)$ is the size of the smallest SES that uniquely represents $w$.
This inequality shows that SES is at least as powerful as BMS for compressing strings.
Unlike BMSs, an SES does not require the string to be partitioned into phrases, and its substring-equality constraints are undirected.
Together, these properties make SES a more flexible representation framework than BMS.

One of our main results is that every string $w$ has an SES of size $O(\chi(w))$, where $\chi(w)$ is the size of its smallest \emph{suffixient set}~\cite{DBLP:journals/corr/abs-2312-01359}.
Suffixient sets were first introduced as a basis for space-efficient string indexes~\cite{DBLP:journals/corr/abs-2312-01359,DBLP:journals/corr/abs-2407-18753}, and their minimum size $\chi$ is computable in linear time~\cite{DBLP:conf/spire/CenzatoOP24,DBLP:conf/spire/NavarroRU25}.
Existing results already place $\chi$ among the central repetitiveness measures~\cite{DBLP:conf/spire/NavarroRU25}.
For example, $\chi(w) \leq \min\{2r(w), 2\bar{r}(w)\}$ holds for every string $w$, where $r(w)$ and $\bar{r}(w)$ are the numbers of runs in the Burrows--Wheeler transforms of $w$ and its reversal, respectively~\cite{Burrows1994ABL}.
It is also known that $\chi$ is bounded by a constant for episturmian words, while it is unknown whether constant-size BMS representations exist even for Tribonacci words~\cite{DBLP:conf/spire/NavarroRU25}.
This gap suggests that suffixient sets capture repetitive structures that are not easily represented by BMS.
However,
it remained open whether $\chi$ is reachable, that is, whether every string $w$ admits an $O(\chi(w))$-word representation.
It is also known that $\chi$ upper bounds the size $\gamma$ of the smallest string attractor~\cite{DBLP:conf/spire/NavarroRU25}, whose reachability is a long-standing open problem.
Regarding the reachability of $\chi$, Navarro et al.~\cite{DBLP:conf/spire/NavarroRU25} state the conjecture: ``We conjecture, instead, that $\chi$ is not reachable, proving which would imply that $\gamma$ is also unreachable, a long-time open question.''
We disprove this conjecture by constructing an SES of size $O(\chi(w))$ for every string $w$.
The construction can be computed in $O(|w|)$ time over a linearly sortable alphabet.

We also study the complexity of computing $s(w)$, the size of the smallest SES that represents $w$.
We show that computing $s(w)$ and approximating $s(w)$ within a factor of $1 + \varepsilon$ is NP-hard for a fixed constant $\varepsilon > 0$ by a reduction from \prob{minimum vertex cover} on simple $4$-regular graphs.

Finally, we prove that every SES representing $w$ can be converted into a BMS for $w$ without an asymptotic increase in size.
More precisely, $s(w)$ and $b(w)$ satisfy $s(w) \leq b(w) \leq 4s(w)$.
Thus, SES and BMS are equivalent up to a constant factor, despite the greater structural flexibility of SES.
We also obtain the new bound $b(w) \in O(\chi(w))$ by combining this equivalence with our $O(\chi(w))$-size SES construction.
Since $\chi$ is bounded by a constant for episturmian words, this bound gives the first known construction of constant-size BMS representations for episturmian words.

\section{Preliminaries}
For every nonnegative integer $n$, let $[n] = \{1, \dots, n\}$, where $[0] = \emptyset$.
For integers $a \leq b$, let $[a, b] = \{i \in \mathbb{Z} \mid a \leq i \leq b\}$ and $[a, b) = \{i \in \mathbb{Z} \mid a \leq i < b\}$.
Let $\Sigma$ be an alphabet of size $|\Sigma| \geq 2$.
An element of $\Sigma$ is called a character.
A string $w$ of length $|w| = n$ over the alphabet $\Sigma$ is a sequence $w[1]\cdots w[n]$ of characters where $w[i] \in \Sigma$ for all $1 \leq i \leq n$.

For any two strings $x$ and $y$, we denote by $xy = x[1] \cdots x[|x|] y[1] \cdots y[|y|]$ the concatenation of $x$ and $y$.
For any string $x$, we denote by $x^R = x[|x|] \cdots x[1]$ its reversal.
If $w = xyz$ holds for some strings $x, y, z \in \Sigma^*$, then $x$, $y$, and $z$ are called a prefix, a substring, and a suffix of $w$, respectively.
For $1 \leq i \leq j \leq n$, we denote by $w[i..j] = w[i]\cdots w[j]$ the substring of $w$ from position $i$ to $j$,
and $w[i..j) = w[i..j-1]$.
For convenience, $w[i..j]$ is the empty string when $j < i$.
For every string $w$, we denote by $\sigma(w)$ the number of distinct characters appearing in $w$, including the special character $\$$.

For any $x\in\Sigma^*$ and $c\in \Sigma$, 
$xc$ is a {\em right extension} in $w$ if
$xc$ is a substring of $w$ and for some
$c' \neq c$, $xc'$ is also a substring of $w$.
A right extension in $w$ is \emph{super-maximal} if it is not a proper suffix of another right extension in $w$.
The set of all right extensions and super-maximal right extensions in $w$ is denoted by $\re(w)$ and $\sre(w)$, respectively.

Existing work establishes a one-to-one correspondence between smallest suffixient sets and super-maximal right extensions~\cite{DBLP:journals/corr/abs-2407-18753,DBLP:conf/spire/CenzatoOP24}.
Using the notation of Navarro et al.~\cite{DBLP:conf/spire/NavarroRU25}, we adopt the following definitions.
\begin{definition}
For a string $w$ of length $n$,
a set $S \subseteq [n]$ is a \emph{suffixient set} for $w$ if for every right extension $x \in \re(w)$ there exists $j \in S$ such that $x$ is a suffix of $w[1..j]$.
\end{definition}
\begin{definition}
For a string $w$ of length $n$,
a set $S \subseteq [n]$ is a \emph{smallest suffixient set} for $w$ if
there exists a bijection $\fnpos: \sre(w) \rightarrow S$ such that every super-maximal right extension $x \in \sre(w)$ is a suffix of $w[1..\fnpos(x)]$.
\end{definition}
The repetitiveness measure $\chi$ is defined as $\chi(w) = |\sre(w)|$.
By the above definition, $\chi(w)$ is exactly the size of a smallest suffixient set for $w$.

\Cref{fig:suffixient_set} shows an example of a smallest suffixient set and the corresponding super-maximal right extensions.

\begin{figure}[h]
  \centering
  \includegraphics[width=0.8\textwidth]{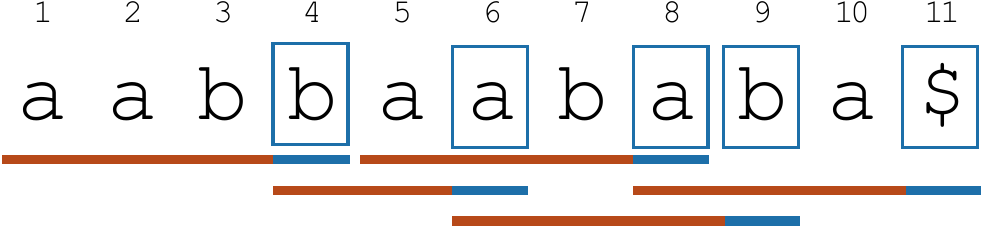}
  \caption{
    An example of a smallest suffixient set and super-maximal right extensions for $w=\texttt{aabbaababa}\$$.
    The blue boxes indicate the positions in the smallest suffixient set.
    The line segments under the characters indicate the super-maximal right extensions, with the blue segment marking the last character of each extension.
    These last characters are in one-to-one correspondence with the elements of the smallest suffixient set.
  }
  \label{fig:suffixient_set}
\end{figure}

A \emph{bidirectional macro scheme} (BMS) for a string $w$ is a representation that partitions $w$ into consecutive phrases $f_1 \cdots f_k$.
Each phrase $f_i$ is either stored explicitly as a character or represented by a pointer $(p,\ell)$ meaning that $f_i$ equals the substring $w[p..p+\ell-1]$.
Such a scheme induces a transition function $\varphi:[n]\to [n]\cup\{\bot\}$ on text positions.
If position $i$ is stored explicitly, then $\varphi(i)=\bot$.
Otherwise, if $i$ is the $t$-th position of a phrase defined by a pointer $(p,\ell)$, then $\varphi(i)=p+t-1$.
We call the scheme \emph{valid} if for every position $i\in[n]$ there exists $k\ge 0$ such that $\varphi^k(i)=\bot$ (equivalently, the induced reference relation contains no directed cycle).
We denote by $b(w)$ the minimum number of phrases in a valid bidirectional macro scheme for $w$.

\section{Substring Equation Systems}
A \emph{substring equation system} specifies constraints of two types on an unknown string $w$ of length $n$: substring equalities of the form $w[i..i+\ell-1] = w[j..j+\ell-1]$, and character assignments of the form $w[k]=c$.
There exists an $O(n)$-time algorithm that decides satisfiability and uniqueness of a substring equation system and, if the system is satisfiable, outputs a string $w\in\Sigma^n$ satisfying all constraints~\cite{DBLP:journals/tcs/GawrychowskiKRR20}.
When the satisfying string is unique, the system can be viewed as a compact representation of the string.
We formalize substring equation systems as follows.
\begin{definition}[Substring equation system (SES)]
An instance of a \emph{substring equation system} for a string $w\in\Sigma^n$ is a triple $(n,\Eq,\Ch)$, where $n$ is the length of $w$, $\Eq$ is a finite set of substring-equality constraints, and $\Ch$ is a finite set of character-assignment constraints.
Each element of $\Eq$ is a pair $(\{i, j\}, \ell)$ with $1 \leq i < j \leq n$ and $1 \leq \ell \leq n - j + 1$, representing the equation $w[i..i+\ell-1] = w[j..j+\ell-1]$.
Each element of $\Ch$ is a pair $(k,c)$ with $1\le k\le n$ and $c\in\Sigma$, representing the equation $w[k]=c$.
We say that $(n,\Eq,\Ch)$ \emph{represents} $w$ if $w$ satisfies all constraints in $\Eq$ and $\Ch$ and $w$ is the unique string in $\Sigma^n$ satisfying them.
The \emph{size} of the system is defined as $|\Eq|+|\Ch|$.
\end{definition}

Analogously to other compression-based repetitiveness measures, we define $s(w)$ as the minimum size of an SES that represents $w$.

SES can be seen as a flexible way of expressing copy--paste constraints between substrings.
In this sense, it generalizes classical directed copy--paste representations such as bidirectional macro schemes (BMS).
BMS first fixes a parsing into phrases and gives each copied phrase one directed pointer to a source interval, whereas SES specifies only substring-equality constraints.
Accordingly, SES does not require an explicit partitioning of the string into phrases, and its constraints are stated as undirected equalities between substrings.
In particular, every valid BMS can be converted into an SES of the same size.
\begin{theorem} \label{thm:bms_to_ses}
For every string $w\in\Sigma^n$ and every valid bidirectional macro scheme for $w$ with $k$ phrases,
there exists a substring equation system (SES) of size $k$ that represents $w$.
\end{theorem}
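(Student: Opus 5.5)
We convert the BMS phrase by phrase. Fix a valid BMS $f_1\cdots f_k$ for $w$. For each phrase, let it start at position $q$ and have length $\ell$.

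If the phrase is stored explicitly as the character $c$, we put $(q,c)$ into $\Ch$.

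If the phrase is a pointer $(p,\ell)$, we want the equation $w[q..q+\ell-1]=w[p..p+\ell-1]$. We record it in $\Eq$ as $(\{\min(p,q),\max(p,q)\},\ell)$. This pair has $i<j$ and $\ell \le n-j+1$, because both intervals lie inside $[n]$. The case $p=q$ cannot occur: then $\varphi(q)=q$ would be a cycle and the scheme would be invalid. Distinct phrases start at distinct positions $q$, so these constraints are distinct elements. Hence $|\Eq|+|\Ch|=k$.

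Next, $w$ satisfies every constraint. This holds by the definition of a BMS for $w$: each explicit phrase equals its character, and each copied phrase equals its source.

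The main step is uniqueness. Let $w'\in\Sigma^n$ satisfy all constraints. We show $w'[i]=w[i]$ for every $i\in[n]$ by induction on the least $t\ge 0$ with $\varphi^{t}(i)=\bot$; such a $t$ exists by validity. Note that $t\ge 1$, since $\varphi^0(i)=i\neq\bot$.

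If $t=1$, position $i$ is stored explicitly as some $c$. The constraint $(i,c)\in\Ch$ then forces $w'[i]=c=w[i]$.

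If $t\ge 2$, position $i$ is the $s$-th position of a copied phrase with pointer $(p,\ell)$, so $\varphi(i)=p+s-1$. The substring equality for this phrase gives, positionwise, $w'[i]=w'[p+s-1]$, and likewise $w[i]=w[p+s-1]$. Position $\varphi(i)$ reaches $\bot$ in $t-1$ steps, so the induction hypothesis gives $w'[\varphi(i)]=w[\varphi(i)]$. Therefore $w'[i]=w[i]$.

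The only delicate point is choosing this induction measure, which is well-founded exactly because the scheme is acyclic. Undirected equalities alone could allow cycles, but here every chain of equalities ends at an explicit character. Thus $w'=w$, and $(n,\Eq,\Ch)$ represents $w$.
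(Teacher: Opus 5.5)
Your proof is correct and follows the paper's argument: the same phrase-by-phrase conversion, with uniqueness proved by induction along the acyclic reference structure. Your induction on the number of $\varphi$-steps to reach $\bot$ plays the role of the paper's topological order, and your checks that $p\neq q$ and $i<j$ are useful details the paper omits.

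One small correction to your distinctness claim. An equation is stored with the unordered pair $\{i,j\}$, so distinct phrase starts alone do not make the equations distinct. Copied phrases at $q_1\neq q_2$ of equal length with sources $p_1=q_2$ and $p_2=q_1$ would yield the same element. That configuration gives $\varphi(q_1)=q_2$ and $\varphi(q_2)=q_1$, which is a cycle, so validity rules it out.
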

\begin{proof}
Let $f_1\cdots f_k$ be a valid bidirectional macro scheme for $w$.
For each phrase $f_i$, let $[a_i..b_i]$ be its interval in $w$.
If $f_i$ is stored explicitly as a character $c$, then we add the character constraint $(a_i,c)$ to $\Ch$.
Otherwise $f_i$ is represented by a pointer $(p,\ell)$ with $\ell=b_i-a_i+1$, meaning that
$w[a_i..b_i]=w[p..p+\ell-1]$; in this case we add the substring-equality constraint $(\{a_i,p\},\ell)$ to $\Eq$.
Let $(n,\Eq,\Ch)$ be the resulting SES.
By construction, $w$ satisfies all constraints and 
the size of the SES is $|\Eq|+|\Ch|=k$.

It remains to show that the string satisfying all constraints is unique.
Consider the transition function $\varphi$ induced by the macro scheme.
Every substring-equality constraint $(\{a_t,p\},\ell)$ implies that for each $0\leq q<\ell$ we have $\varphi(a_t+q)=p+q$,
while every character constraint $(a_t,c)$ corresponds to $\varphi(a_t)=\bot$.
Because the macro scheme is valid, the directed graph on positions induced by $\varphi$ is acyclic and every position reaches some position with $\varphi=\bot$.
Take any topological order of this graph.
In this order, each position either has an explicit character constraint fixing its value,
or it is constrained to be equal to a position that appears earlier in the order.
Thus, by induction along the order, the character at every position is uniquely determined.
Hence, $(n,\Eq,\Ch)$ represents $w$.
\end{proof}
The inequality $s(w)\le b(w)$ follows immediately from \Cref{thm:bms_to_ses}.

A set $\Eq$ of substring-equality constraints of an SES naturally induces an undirected \emph{referencing graph} $H_{\Eq}$ on the text positions.
Each equation $(\{i, j\}, \ell)$ adds the edge $\{i + h, j + h\}$ for every $0 \leq h < \ell$.

The vertex sets of the connected components of $H_{\Eq}$ form a partition of the text positions $[n]$.
We denote this partition by $\mathcal{C}(H_\Eq)$ and define $\kappa(\Eq) = |\mathcal{C}(H_\Eq)|$.

For any fixed set $\Eq$ of substring equations satisfied by a string $w$, exactly $\kappa(\Eq)$ character assignments are necessary and sufficient for $(|w|, \Eq, \Ch)$ to represent $w$.
Indeed, the equations force all positions in each component to contain the same character, and every component must contain a character assignment that determines this character.
Conversely, assigning the character of $w$ to one position in each component uniquely determines the characters at all positions.

A substring equation satisfied by $w$ is \emph{maximal} if its two occurrences cannot be extended to the left or to the right while preserving their equality.
The following observation restricts the substring equations that need to be considered.
\begin{observation}\label{obs:maximal_ses}
For every string $w$, there exists a minimum-size SES representing $w$ in which every substring equation has length at least $2$ and is maximal.
\end{observation}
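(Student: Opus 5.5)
We start from an arbitrary minimum-size SES $(n,\Eq,\Ch)$ representing $w$ and transform it, one equation at a time, into an SES of the same or smaller size in which every equation has length at least $2$ and is maximal. The bookkeeping rests on the fact stated above: for a fixed set $\Eq$ satisfied by $w$, exactly $\kappa(\Eq)$ character assignments are needed and suffice. So it is enough to modify $\Eq$ so that $|\Eq|$ does not increase, $\kappa(\Eq)$ does not increase, and every equation stays satisfied by $w$. Then we choose $\Ch$ afresh with one assignment per component, and the total size $|\Eq|+\kappa(\Eq)$ does not increase. By minimality it then stays equal to $s(w)$.

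\textbf{Step 1: maximality.} Take an equation $(\{i,j\},\ell)$ satisfied by $w$. Extend both occurrences left and right as long as the characters agree, obtaining $(\{i',j'\},\ell')$ with $i'-i=j'-j\le 0$ and $i'+\ell'\ge i+\ell$. The new equation is still satisfied by $w$, and it is maximal by construction.

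Its edge set in $H_{\Eq}$ contains all edges $\{i+h,j+h\}$ of the old equation. Replacing the old equation by the new one therefore only adds edges, so $\kappa$ cannot increase. If the new equation coincides with an equation already present, we simply drop one copy, which decreases $|\Eq|$. Each replacement does not lower the number of maximal equations, and the maximal equations are never changed again. Processing the equations in turn therefore makes them all maximal after finitely many steps.

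One technicality must be handled: the two occurrences may become overlapping, or come to coincide in some degenerate way. Since $i\neq j$, the shift $j-i$ is preserved, so $i'<j'$ still holds and the definition remains valid. Overlap is allowed by the definition.

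\textbf{Step 2: length at least $2$.} Now consider a maximal equation of length $1$, i.e.\ $w[i]=w[j]$ that cannot be extended. This equation contributes the single edge $\{i,j\}$. We delete it and consider the effect on $\kappa$:
\begin{itemize}
\item if the edge is a bridge, $\kappa$ grows by at most $1$ and we add one character assignment, so the total size is unchanged;
\item otherwise $\kappa$ is unchanged and the size drops.
\end{itemize}
In either case the new system has size at most the old one, and the number of length-$1$ equations decreases.

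Deletions do not affect the maximality of the remaining equations, because maximality depends only on $w$ and the equation itself. Hence Steps 1 and 2 can be performed in sequence: first make all equations maximal, then delete the length-$1$ ones.

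\textbf{Main obstacle.} The only delicate point is the accounting in Step 2, which relies on the equivalence between the minimum number of character assignments and $\kappa(\Eq)$. Removing a single edge changes the number of connected components by at most one, which is exactly what is needed. Step 1 needs only the monotonicity of the edge set, which holds because the extended equation's edge set is a superset of the original one.
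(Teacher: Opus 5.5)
Your proof is correct and follows the same argument the paper gives for this observation. Extending an equation to a maximal one only adds edges to $H_{\Eq}$, so $\kappa$ cannot grow; deleting a length-$1$ equation raises $\kappa$ by at most one, which one extra character assignment pays for. Your explicit ordering (maximalize first, then delete length-$1$ equations) and your checks on termination and well-definedness make the paper's brief justification fully rigorous.
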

Indeed, removing a length-$1$ equation increases the number of connected components by at most one, so at most one additional character assignment is needed to keep every position uniquely determined.
Moreover, replacing a non-maximal equation by a maximal extension does not remove any edge from the referencing graph and therefore cannot increase the number of connected components.

\section{Constructing an SES from Suffixient Sets}
In this section, we show how to construct an SES of size $O(\chi(w))$ from a smallest suffixient set for $w$.
For this section, we assume that $w$ is terminated by a unique end-of-string symbol $\$$ that does not occur elsewhere in $w$.
This assumption does not affect the $O(\chi(w))$ size bound because $\chi(w\$) \leq \chi(w) + 2$ for every string $w$~\cite{DBLP:conf/spire/NavarroRU25}.

Our strategy proceeds in three steps.
We start by defining a position-equivalence relation based on super-maximal right extensions, which are in one-to-one correspondence with the elements of a smallest suffixient set.
We then show that the resulting equivalence classes correspond to the distinct characters in the text, and thus their number is $\sigma(w)\le \chi(w)$.
Finally, we encode the relation using $O(\chi(w))$ substring-equality constraints and $\sigma(w)$ character assignments, obtaining an SES of total size $O(\chi(w))$ representing $w$.

We first define an equivalence relation $\equiv_\chi$ over positions $[n]$ of $w \in \Sigma^n$ as follows:
\begin{definition}[position equivalence by suffixient sets]\label{def:chi_equivalence}
The relation $\equiv_\chi$ is the smallest equivalence relation on $[n]$ satisfying the following condition.
For any pair of super-maximal right extensions
$yxc$ and $zxc'$ where $y,z \in \Sigma^*$, $c,c'\in\Sigma$, and $x\in\Sigma^+$ is the longest common suffix of $yx$ and $zx$,
let $w[i..i+|x| - 1]$ be the occurrence of $x$ in the leftmost occurrence of $yxc$ in $w$,
and 
let $w[i'..i'+|x| - 1]$ be the occurrence of $x$ in the leftmost occurrence of $zxc'$ in $w$.
Then, $i+k \equiv_\chi i'+k$ for all $0 \leq k < |x|$.
\end{definition}
Intuitively, \Cref{def:chi_equivalence} declares positions equivalent when they lie in the common-suffix part of the leftmost occurrences of two super-maximal right extensions, ignoring the final extending character.
Note that in the above definition, we have chosen the leftmost occurrence of each super-maximal right extension to simplify the exposition, but the choice can be arbitrary.

\begin{lemma}\label{lem:unique_supermaximal}
If $u = w[i..i+|u| - 1]$ is a right extension with a unique occurrence,
then there exists $1 \leq j \leq i$ such that $w[j..i+|u| - 1]$ is a super-maximal right extension.
\end{lemma}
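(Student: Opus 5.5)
The plan is to extend $u$ to the left as far as possible while it remains a right extension, and to check that the resulting string is super-maximal. Let $e = i+|u|-1$ be the end position of $u$. Consider the set $J = \{ j \in [1,i] : w[j..e] \in \re(w)\}$. It is nonempty because $i \in J$. Let $j$ be its minimum. We claim that $v = w[j..e]$ is a super-maximal right extension.

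The first step is a monotonicity observation: being a right extension is closed under taking suffixes. If $xc \in \re(w)$ with witness $xc'$, then for any suffix $x'$ of $x$ both $x'c$ and $x'c'$ occur in $w$, so $x'c \in \re(w)$. We will not need more than this, but it clarifies that $J$ is an interval $[j,i]$.

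The second step is to show that no proper left extension of $v$ that is a right extension exists anywhere in $w$. Suppose for contradiction that $v$ is a proper suffix of some right extension $t = sv \in \re(w)$ with $|s|\ge 1$. Write $u = u'c$, so that $t$ ends with $u'c$ and $t$ is a substring of $w$. Since $u$ is a suffix of $t$, every occurrence of $t$ yields an occurrence of $u$ at its end. Because $u$ occurs uniquely at $[i..e]$, every occurrence of $t$ ends at position $e$, so $t = w[e-|t|+1..e]$ with start $j' = e-|t|+1 < j$. We also need $j' \ge 1$, which holds since $t$ is a substring of $w$ ending at $e$. If $j' \le i$ (automatic, since $j'<j\le i$), then $j' \in J$, contradicting the minimality of $j$. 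Hence $v$ is not a proper suffix of any right extension, so $v \in \sre(w)$ with $1 \le j \le i$, as required.

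The only delicate point is the step transferring the uniqueness of the occurrence of $u$ to the occurrence of the longer string $t$. This is what forces $t$ to be anchored at the same end position, so that it has the form $w[j'..e]$ and falls within the candidates considered in $J$. I expect this step to be the main one to state carefully, but it is immediate once written down. No earlier results of the paper are needed beyond the definitions of $\re(w)$ and $\sre(w)$.
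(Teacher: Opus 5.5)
Your proof is correct and takes essentially the same approach as the paper. Both take the smallest $j$ for which $w[j..i+|u|-1]$ is a right extension, then rule out a longer right extension $t$ having it as a suffix: uniqueness of $u$ would force $t$ to occur ending at $i+|u|-1$, which contradicts the minimality of $j$ (the paper phrases this contrapositively, and your suffix-closure remark is not needed).
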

\begin{proof}
Let $j$ be the smallest value such that $w[j..i+|u| - 1]=xu$
is a right extension.
If $xu$ is not a super-maximal right extension,
there must be some other right extension $v = yxu$ having $xu$ as a proper suffix.
Since $xu$ is the longest right extension having an occurrence ending at $i+|u|-1$, $v$ cannot have an occurrence ending at $i+|u|-1$, implying another occurrence of $v$ -- and thus of $xu$ -- elsewhere.
However, this contradicts the assumption that the occurrence of $u$ is unique.
\end{proof}

\begin{lemma}\label{lem:supermaximal}
If $u$ is a right extension,
then there exists a super-maximal right extension containing $u$ as a suffix.
\end{lemma}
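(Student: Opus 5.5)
The plan is a maximality argument over the finite set of right extensions that have $u$ as a suffix. Let $E_u \subseteq \re(w)$ be the set of right extensions having $u$ as a suffix. By assumption $u \in E_u$, so $E_u$ is nonempty. Every element of $E_u$ is a substring of $w$, so its length is at most $n$, and $E_u$ is finite.

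Choose $v \in E_u$ of maximum length. I claim that $v \in \sre(w)$. Suppose not. Then there is a right extension $v'$ having $v$ as a proper suffix. Since being a suffix is transitive, $u$ is a suffix of $v'$, so $v' \in E_u$. But $|v'| > |v|$, which contradicts the maximality of $|v|$. Hence $v$ is a super-maximal right extension with $u$ as a suffix, which is the required conclusion.

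An alternative route goes through \Cref{lem:unique_supermaximal}. Take a maximal chain of proper suffix-extensions by right extensions, and observe that its last element is super-maximal. This is really the same argument, so I will present the direct extremal version.

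There is no real obstacle here. The only points to state explicitly are that the suffix relation is transitive, which keeps $v' \in E_u$, and that $E_u$ is finite, which guarantees a longest element exists.
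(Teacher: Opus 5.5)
Your argument is correct and is essentially the paper's approach: the paper only says the lemma is straightforward from the definition, and your choice of a longest right extension ending with $u$ (which cannot be a proper suffix of another right extension, since that one would also end with $u$ and be longer) is the natural way to write that out. Drop the aside about going through \Cref{lem:unique_supermaximal}, since that lemma requires a uniquely occurring right extension and so does not apply to a general $u$; your main proof does not depend on it.
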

\begin{proof}
Straightforward from definition of super-maximal right extensions.
\end{proof}

\begin{lemma}\label{lem:substring_equivalence}
  For any repeating substring $u$ of $w$ and any pair of occurrences 
  $u = w[i..i+|u| - 1] = w[i'..i'+|u| - 1]$, 
  $i+k \equiv_\chi i'+k$ for all $0\leq k <|u|$.
\end{lemma}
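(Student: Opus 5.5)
We argue by strong induction on the ending position. Concretely, we prove: for every pair of positions $p, p'$ with $w[p] = w[p']$, we have $p \equiv_\chi p'$. This is the statement for $|u| = 1$. The general statement then follows by applying the one-character case to each aligned pair $(i+k, i'+k)$, since $w[i+k] = w[i'+k]$.

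So fix $p < p'$ with $w[p] = w[p']$. Since $\$$ is unique, $p' < n$. We extend leftward. Let $v$ be the longest common suffix of $w[1..p]$ and $w[1..p']$, so $|v| \geq 1$, and write its two occurrences as ending at $p$ and $p'$.

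\textbf{Case 1: $w[p+1] \neq w[p'+1]$.} Then $v\,w[p+1]$ and $v\,w[p'+1]$ are both right extensions.
\begin{enumerate}
\item By \Cref{lem:supermaximal}, we take super-maximal right extensions $y_1 v c$ and $z_1 v c'$, with $c = w[p+1]$ and $c' = w[p'+1]$.
\item Their longest common suffix $x$ (excluding the final characters) has $v$ as a suffix.
\item \Cref{def:chi_equivalence} equates the leftmost occurrences of $x$ inside these two extensions, position by position.
\item This gives the equivalence only for the leftmost occurrences, not for our specific occurrences ending at $p$ and $p'$. So we must also relate the occurrence of $yvc$ ending at $p+1$ to its leftmost occurrence, and likewise for $z v c'$.
\end{enumerate}
This suggests the right induction hypothesis. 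For any two occurrences of a common substring, aligned positions are equivalent, proven by induction on a well-founded measure. The natural measure is the pair $(p', p)$ under lexicographic order, or the sum of the ending positions.

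\textbf{Case 2: $w[p+1] = w[p'+1]$.} Then we extend to the right instead. Let $u'$ be the maximal right extension of the common substring $w[p..q] = w[p'..q']$, so that $w[q+1] \neq w[q'+1]$. After this step, the occurrences end at $q$ and $q'$ with differing next characters, and we are back in Case 1 with a longer common suffix.

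\textbf{Main obstacle: reducing to leftmost occurrences.} Take the super-maximal extension $s = y v c$ given by \Cref{lem:supermaximal}. We need to show that its occurrence ending at $p+1$ (if $s$ actually occurs there) is position-equivalent to its leftmost occurrence.
\begin{itemize}
\item Two occurrences of $s$ at different places, followed by different next characters, form a new pair of right extensions. Either these have strictly smaller ending positions, or they reduce to another super-maximal pair.
\item If $s$ occurs only once, \Cref{lem:unique_supermaximal} is the relevant tool.
\item The subtlety is that $y v$ may not end exactly at $p$. The super-maximal extension containing $v w[p+1]$ as a suffix might occur elsewhere.
\end{itemize}
To handle this, I would instead induct on $p'$ (the larger endpoint) for the claim ``all occurrences of any substring ending at or before $p'$ are aligned-equivalent to the leftmost occurrence.'' For a non-leftmost occurrence $O$ of a maximal-left-context string:
\begin{itemize}
\item If $O$ extends to a right extension that is uniquely located, use \Cref{lem:unique_supermaximal} to identify it with a super-maximal extension whose leftmost occurrence is $O$ itself.
\item Then pair it via \Cref{def:chi_equivalence} with another super-maximal extension sharing the suffix $x$, whose relevant occurrence lies strictly earlier. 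The inductive hypothesis then applies to that earlier occurrence.
\end{itemize}
Making this decreasing measure precise is where I expect most of the work to lie.
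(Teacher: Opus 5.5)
There is a genuine gap. You correctly find the crux but do not resolve it. In Case~1 you must relate your occurrence of $vc$ ending at $p+1$ (and of $vc'$ ending at $p'+1$) to the copy of $vc$ at the end of the \emph{leftmost} occurrence of a super-maximal right extension $y_1vc$. That is an instance of the very lemma you are proving, for a different pair of occurrences. You never fix a well-founded measure under which this instance is smaller, and the one you suggest (end positions, lexicographic or summed) actually increases. Case~2 moves the endpoints from $(p,p')$ to $(q,q')$ with $q'>p'$. In Case~1 the required super-maximal extension need not occur at $p'+1$ at all. For instance, take $w=\mathtt{abcadeadeaf}\$$ with $p=1$, $p'=4$. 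Then $v=\mathtt{a}$, $c'=\mathtt{d}$, and the only super-maximal right extension ending in $\mathtt{ad}$ is $\mathtt{adead}=w[4..8]$. So you must relate the occurrences $w[4..5]$ and $w[7..8]$ of $\mathtt{ad}$, both ending after $p'$. Your final sketch is likewise unjustified: nothing guarantees a partner super-maximal extension whose occurrence lies ``strictly earlier'', and the case where the right extension of $O$ is not unique is not treated.

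The missing idea is to keep the multi-character statement (do not reduce to $|u|=1$) and induct on $|u|$ in \emph{decreasing} order. Since a repeating $u$ avoids $\$$, the characters $c=w[i+|u|]$ and $c'=w[i'+|u|]$ exist.

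If $c=c'$, apply the hypothesis to the longer repeating string $uc$.

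If $c\neq c'$, take by \Cref{lem:supermaximal} a super-maximal right extension ending in $uc$. When $uc$ repeats, the hypothesis for $uc$ is longer than $u$ and covers \emph{all} pairs of its occurrences. It therefore equates your occurrence with the copy of $uc$ ending that extension's leftmost occurrence, which absorbs exactly your worry that $yv$ need not end at $p$. When $uc$ is unique, any string ending in $uc$ can occur only ending at $i+|u|$. So that extension ends there and is its own leftmost occurrence (cf.\ \Cref{lem:unique_supermaximal}). Treat $uc'$ the same way, and apply \Cref{def:chi_equivalence} to the two extensions; their longest common suffix $x$ has $u$ as a suffix. Transitivity then closes the step.

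The base case, a longest repeating $u$, is just the situation where $uc$ and $uc'$ are both unique. Your Case~1 skeleton is essentially this argument; what it lacked was the length-based induction.
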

\begin{proof}
Proof by induction on the length of $u$ in decreasing order. Let $u$ be a longest repeating substring of $w$.
Then $w[i+|u|]\neq w[i'+|u|]$, and both $w[i..i+|u|]$ and $w[i'..i'+|u|]$ are right extensions that have a unique occurrence in $w$.

It follows from~\Cref{lem:unique_supermaximal} that for some $1 \leq j \leq i, 1\leq j'\leq i'$, $w[j..i+|u|]$ and $w[j'..i'+|u|]$ are super-maximal right extensions.
Since their occurrences are unique, they are leftmost occurrences.
Therefore, it holds that $i+k\equiv_\chi i'+k$ for all $0 \leq k < |u|$ by~\Cref{def:chi_equivalence}.

Now, suppose that the statement holds for any repeating substring longer than $u$.
Let $uc = w[i..i+|u|]$ and $uc' = w[i'..i'+|u|]$.
If $c = c'$, then $uc=uc'$ is a repeating substring longer than $u$ and thus 
$i+k\equiv_\chi i'+k$ for all $0\leq k \leq |u|$ follows from the induction hypothesis. 
Otherwise, $c\neq c'$ and thus $uc$ and $uc'$ are right extensions.
If the occurrence of both $uc$ and $uc'$ are unique, then~\Cref{lem:unique_supermaximal} again implies that
$uc$ and $uc'$ respectively occur as suffixes of uniquely occurring super-maximal right extensions (which are leftmost occurrences) and therefore
$i+k\equiv_\chi i'+k$ for all $0\leq k < |u|$ by \Cref{def:chi_equivalence}.

If $uc$ is a repeating substring, then by the induction hypothesis the following holds: for any occurrence $uc = w[i''..i''+|u|]$, we have $i+k\equiv_\chi i''+k$ for all $0\leq k \leq |u|$.
In particular, this applies to the occurrence of $uc$ in the leftmost occurrence of a super-maximal right extension that has $uc$ as a suffix, whose existence follows from~\Cref{lem:supermaximal}.
The same argument applies to $uc'$. Therefore, by~\Cref{def:chi_equivalence} and transitivity,
$i+k\equiv_\chi i'+k$ holds for all $0 \leq k < |u|$.
\end{proof}
As a result, the following statement holds immediately by applying \Cref{lem:substring_equivalence} to substrings of length $1$.
\begin{corollary}\label{cor:equiv_classes_are_chars}
For any string $w\in\Sigma^n$ and any $i,j\in[n]$, we have $i \equiv_\chi j \iff w[i]=w[j]$.
In particular, the equivalence classes of $\equiv_\chi$ are in one-to-one correspondence with the distinct characters of $w$, and thus the number of classes is $\sigma(w)$.
\end{corollary}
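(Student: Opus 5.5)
The plan is to prove the two directions of $i \equiv_\chi j \iff w[i]=w[j]$ separately, and then read off the count of classes.

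For the direction $w[i]=w[j] \Rightarrow i\equiv_\chi j$, we may assume $i\neq j$, since reflexivity covers $i=j$. Set $u=w[i]$, a string of length $1$. It occurs at both positions $i$ and $j$, so it is a repeating substring of $w$. \Cref{lem:substring_equivalence} with $k=0$ then gives $i\equiv_\chi j$ directly.

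For the converse direction, let $\sim$ be the relation $i\sim j \iff w[i]=w[j]$; it is an equivalence relation on $[n]$. Since $\equiv_\chi$ is the smallest equivalence relation containing the generating pairs of \Cref{def:chi_equivalence}, it suffices to check that every generating pair lies in $\sim$. A generating pair has the form $(i+k, i'+k)$, where $w[i..i+|x|-1]$ and $w[i'..i'+|x|-1]$ are two occurrences of the same string $x$. Hence $w[i+k]=x[k+1]=w[i'+k]$, so the pair lies in $\sim$. Minimality of $\equiv_\chi$ then gives ${\equiv_\chi}\subseteq{\sim}$.

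The two inclusions give ${\equiv_\chi}={\sim}$. The classes of $\sim$ are exactly the nonempty sets $\{i : w[i]=c\}$, one for each character $c$ occurring in $w$. Because $w$ ends with $\$$ in this section, $\$$ is among these characters, and the number of classes is $\sigma(w)$. No step here is a real obstacle: all the work is already in \Cref{lem:substring_equivalence}. The only point needing care is stating minimality correctly, namely that an equivalence relation containing the generators contains $\equiv_\chi$.
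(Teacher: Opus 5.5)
Your proof is correct and follows the paper's approach: the paper obtains the corollary by applying \Cref{lem:substring_equivalence} to length-$1$ substrings, which is exactly your forward direction. You also spell out the converse inclusion ${\equiv_\chi}\subseteq{\sim}$ via minimality (each generating pair relates corresponding positions of two occurrences of the same $x$) and the role of $\$$ in the count $\sigma(w)$; the paper leaves both points implicit.
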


We are now ready to prove that $\chi$ is reachable.
\begin{theorem}
For any string $w$, there exists an $O(\chi(w))$-word representation (an SES of size $O(\chi(w))$,
and hence $s(w) \in O(\chi(w))$) which can be computed in linear time assuming a linearly sortable alphabet.
\end{theorem}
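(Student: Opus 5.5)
The construction encodes the generating pairs of $\equiv_\chi$ from \Cref{def:chi_equivalence} directly as substring equations, and then adds one character assignment per equivalence class. The catch is that \Cref{def:chi_equivalence} quantifies over all \emph{pairs} of super-maximal right extensions, which could give $\Theta(\chi(w)^2)$ equations. So I will not add an equation for every pair. Instead I will pick a sparse family of pairs whose induced referencing graph has the same connected components as $\equiv_\chi$.

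The key observation concerns the equation generated by a pair $yxc$, $zxc'$. It identifies the occurrence of $x$ at the end of the leftmost occurrence of $yxc$ with the corresponding occurrence in $zxc'$, where $x$ is the longest common suffix of $yx$ and $zx$ (the extensions with their last character removed).

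Let $e_1,\dots,e_m$ with $m=\chi(w)$ be the super-maximal right extensions with their last character dropped. Sort them by their reversals in colexicographic order, i.e., order the reversed strings lexicographically, as in a trie of reversed strings. In this order the longest common suffix of $e_p$ and $e_q$ with $p<q$ equals the minimum of the longest common suffixes of consecutive pairs between them, exactly as with LCP arrays. Hence the equation for $(e_p,e_q)$ identifies the last $|x|$ positions of the two chosen occurrences. These are implied, by transitivity along the chain $e_p,e_{p+1},\dots,e_q$, by the equations for consecutive pairs, because each consecutive common suffix has length at least $|x|$ and every equation aligns the occurrences at their right ends. Consecutive pairs with an empty common suffix contribute nothing.

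So I take $\Eq$ to consist of at most $m-1$ equations, one for each consecutive pair with nonempty common suffix. The components of $H_\Eq$ then coincide with the classes of $\equiv_\chi$. The equations are satisfied by $w$ by construction, so every class of $\equiv_\chi$ is a union of components of $H_\Eq$. Conversely, every generating identification of $\equiv_\chi$ is realized by a path in $H_\Eq$, so every component of $H_\Eq$ is contained in a class of $\equiv_\chi$.

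By \Cref{cor:equiv_classes_are_chars} there are exactly $\sigma(w)$ classes. Adding one assignment $(k,w[k])$ per class gives, by the discussion of $\kappa(\Eq)$, an SES representing $w$ of size at most $m-1+\sigma(w)$. Since each distinct character other than possibly one ends some right extension, $\sigma(w)\le\chi(w)+O(1)$, and the total size is $O(\chi(w))$. The transfer from $w$ to $w\$$ costs $O(1)$, since $\chi(w\$)\le\chi(w)+2$.

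For linear time, I would compute a smallest suffixient set together with the lengths of the super-maximal right extensions in $O(n)$ time using the cited algorithms~\cite{DBLP:conf/spire/CenzatoOP24,DBLP:conf/spire/NavarroRU25}. The occurrences used need not be leftmost, since the paper notes that the choice is arbitrary; I would take those ending at the suffixient positions. The colexicographic ordering and the consecutive common-suffix lengths come from the suffix array and LCP array of $w^R$ with a constant-time range-minimum structure, or equivalently from lowest common ancestor queries in the suffix tree of $w^R$. All of this is linear over a linearly sortable alphabet. Sorting the $m$ positions by their inverse-suffix-array rank takes $O(n)$ time by radix sort.

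The main obstacle I expect is the transitivity argument: checking that the min-over-consecutive-pairs property correctly aligns the occurrences of $x$ at the ends of the extensions, so that the sparse chain of equations reproduces exactly the identifications of the full pairwise definition.
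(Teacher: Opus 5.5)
Your construction is essentially the paper's. The colexicographic order of the truncated extensions is a preorder traversal of the paper's reverse compacted trie, with each node's own extensions listed before its subtree. Your minimum-over-consecutive-pairs argument is the paper's lowest-common-ancestor argument, and you reach the same count $\chi(w)-1+\sigma(w)$. One small slip: your ``conversely'' sentence repeats the conclusion of the sentence before it. The path argument actually shows that every class of $\equiv_\chi$ lies inside one component of $H_{\Eq}$, which is the direction uniqueness needs.

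The step that fails as written is the linear-time ordering. Sorting the chosen positions by the inverse-suffix-array rank of the corresponding suffixes of $w^R$ does \emph{not} give the colexicographic order of the truncated strings, because a suffix of $w^R$ continues past $x^R$.

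Take $w=\texttt{babbadzac\$}$, where $\sre(w)=\{\texttt{z},\$,\texttt{ac},\texttt{ba},\texttt{bb},\texttt{bab},\texttt{bad}\}$. Colexicographically, the truncation $\texttt{a}$ of $\texttt{ac}$ precedes the truncation $\texttt{ba}$ of $\texttt{bab}$ and $\texttt{bad}$. However, the relevant suffixes of $w^R$ are $\texttt{azdabbab}$, $\texttt{ab}$ and $\texttt{abbab}$, so $\texttt{ac}$ is ranked after both. Your LCP-array transitivity argument therefore does not cover the order you actually compute.

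That order still works, but only because of super-maximality. Write the extensions as $e_rc_r$. Suppose $e_p,e_q$ have longest common suffix $x$, and the suffix for $e_r$ lies between theirs.
\begin{itemize}
\item The suffix for $e_r$ also begins with $x^R$, so the chosen occurrence of $e_rc_r$ extends leftward to an occurrence of $xc_r$.
\item Since $x$ is a suffix of $e_p$, and $e_p$ is followed by two distinct characters, $xc_r\in\re(w)$.
\item If $|e_r|<|x|$, then $e_rc_r$ would be a proper suffix of $xc_r$, contradicting super-maximality.
\end{itemize}
Hence every entry between $e_p$ and $e_q$ has $x$ as a suffix, and your chain argument goes through. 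Alternatively, you can obtain the exact order from the loci of the strings $x^R$ in the suffix tree of $w^R$, as the paper does.

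Finally, passing from $w\$$ back to $w$ needs $s(w)\le s(w\$)$ as well as $\chi(w\$)\le\chi(w)+2$. This holds because no equation satisfied by $w\$$ can cover the unique $\$$. You can therefore drop the last position and its assignment.
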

\begin{proof}
We first show that $\equiv_\chi$ can be represented by an $O(\chi(w))$-size SES for $w$.

By \Cref{cor:equiv_classes_are_chars}, the equivalence classes of $\equiv_\chi$ coincide with the character classes of $w$.
Thus, it suffices to store (i) the equivalence relation $\equiv_\chi$ itself and (ii) a mapping from each equivalence class to its character.
The latter can be done by storing one representative position (e.g., the leftmost occurrence) for each character, and since $\sigma(w)\le \chi(w)$, this mapping takes $O(\chi(w))$ words, so it remains to show that $\equiv_\chi$ can be represented in $O(\chi(w))$ space.

The equivalence relation can be encoded as a reverse compacted trie of the following set of strings:
$\{ x^R\mid xc \in \sre(w),\, c\in \Sigma \}$,
where each node corresponding to $x^R$ stores the leftmost occurrences of the super-maximal right extensions
$\{ xc \mid xc\in \sre(w),\, c\in \Sigma \}$, and each edge is labeled by its length.
\Cref{fig:trie_example} shows an example of this trie representation.
It is clear that the size of the trie is $O(\chi)$.
For any two super-maximal right extensions $yxc$ and $zxc'$,
the positions corresponding to the respective occurrences of $x$ in
the leftmost occurrences of $yxc$ and $zxc'$ can be determined from the 
information in the nodes, and the lengths of the paths.
Thus, the equivalence relation can be retrieved.

We next show how to encode this trie as an SES.
Fix an arbitrary ordering of children at each trie node, and consider the resulting depth-first traversal order according to this child ordering.
In that order, list all super-maximal right extensions by visiting each node and collecting the extensions associated with that node.
For each consecutive pair in the list, we add one substring-equality constraint as in \Cref{def:chi_equivalence}, using the common-suffix substring induced by the lowest common ancestor of the corresponding nodes.
This produces $\chi(w)-1$ substring-equality constraints.
As mentioned above, we also add $\sigma(w)$ character-assignment constraints, one for each distinct character.
\Cref{fig:ses_example} shows an example of the resulting SES.

It remains to argue that these constraints induce the same position equivalence as $\equiv_\chi$.
Take any two entries in the list and consider the common suffix $x$ determined by their lowest common ancestor.
Among entries between them in list order, every adjacent pair yields a constraint whose associated common suffix has $x$ as a suffix.
Therefore, by applying transitivity along this chain of adjacent pairs, the induced position-equivalence relation makes corresponding positions in the two occurrences of $x$ equivalent.
In other words, corresponding positions in the two occurrences of $x$ belong to the same equivalence class.
Hence the SES implies all equalities of $\equiv_\chi$.
Thus, we obtain an SES of size $\chi(w)-1+\sigma(w) \in O(\chi(w))$ that represents $w$.

Finally, we analyze the time complexity for constructing the SES representation.
$\sre$ can be computed in linear time~\cite{DBLP:conf/spire/CenzatoOP24}.
Furthermore, the reverse compacted trie is
essentially a sub-tree of the suffix tree of $w^R$,
where the suffix tree is truncated and pruned
below the positions in the suffix tree corresponding to $x^R$ with $xc \in \sre(w)$ for some $x$ and $c$.
The suffix tree can be computed in linear time~\cite{DBLP:conf/focs/Farach97},
all the positions of $x^R$ in the suffix tree
can be identified in linear time~(e.g. Lemma 7.2 of~\cite{DBLP:journals/talg/KociumakaKRRW20}),
and the pruning can be done in linear time by a simple depth-first traversal.
The conversion from the reverse compacted trie to an SES is also another depth-first traversal and can be done in linear time.
\end{proof}

\begin{figure}[h]
  \centering
  \includegraphics[width=0.49\textwidth]{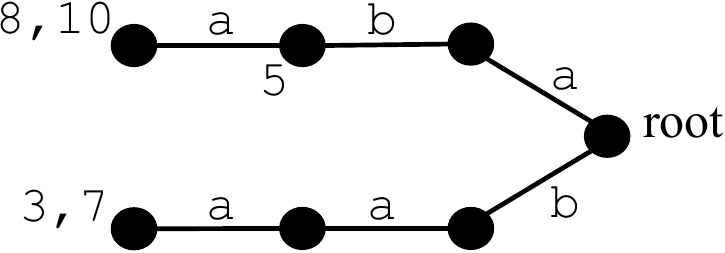}
  \includegraphics[width=0.49\textwidth]{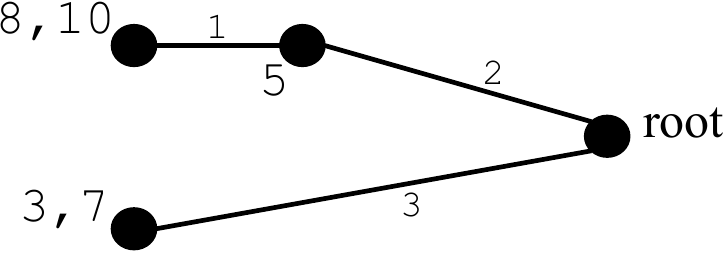}
  \caption{
    The trie and compacted trie of the set $\{x^R \mid xc \in \sre(w),\, c\in\Sigma\}$ for $w=\texttt{aabbaababa}\$$.
    In the compacted trie, each edge is labeled by the length of the corresponding path string.
    Each node stores a set of indices, where each index is the position in the leftmost occurrence of a super-maximal right extension associated with that node of the character immediately preceding the last character.
  }
  \label{fig:trie_example}
\end{figure}

\begin{figure}[h]
  \centering
  \includegraphics[width=0.8\textwidth]{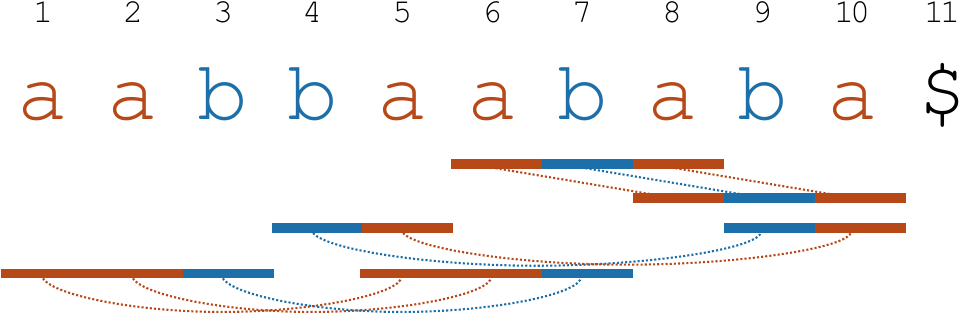}
  \caption{An example of a substring equation system (SES) constructed from the reverse compacted trie induced by the super-maximal right extensions of $w=\texttt{aabbaababa}\$$. The SES represents the equalities $w[6..8]=w[8..10]$, $w[9..10]=w[4..5]$, and $w[1..3]=w[5..7]$. Occurrences of \texttt{a} are colored red and occurrences of \texttt{b} are colored blue. The dotted lines indicate the positionwise equivalence relation implied by the substring equalities. Together with a single-character assignment constraint for each distinct character, these constraints yield an SES of size $O(\chi(w))$ that represents $w$.}
  \label{fig:ses_example}
\end{figure}
 
\section{NP-Hardness of Computing the Minimum SES Size}\label{se:np_hardness}

We prove that computing $s(w)$ is NP-hard by a reduction from \prob{minimum vertex cover} on $4$-regular graphs
\footnote{The reduction can also be formulated using general instances of the set cover problem.
We use vertex cover on simple $4$-regular graphs to simplify the presentation and to obtain a larger inapproximability factor.}.

A graph is \emph{simple} if it has no self-loops and at most one edge between any pair of distinct vertices.
A graph is \emph{$B$-regular} if every vertex has degree $B$.
For a graph $G = (V, E)$, a \emph{vertex cover} is a set $C \subseteq V$ such that every edge in $E$ has at least one endpoint in $C$.
We denote the minimum cardinality of a vertex cover of $G$ by $\tau(G)$.
The \prob{minimum vertex cover} problem considered here asks to compute $\tau(G)$.
We use the following hardness result.
\begin{theorem}[{\cite[Corollary~18]{DBLP:journals/tcs/ChlebikC06}}]\label{thm:vertex_cover_hardness}
It is NP-hard to approximate \prob{minimum vertex cover} within a factor of $53/52$, even when the input is restricted to simple $4$-regular graphs.
\end{theorem}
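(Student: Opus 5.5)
Since this statement is quoted from Chlebík and Chlebíková~\cite[Corollary~18]{DBLP:journals/tcs/ChlebikC06}, the proof I would give is a chain of gap-preserving reductions ending in simple $4$-regular graphs, with the constant $53/52$ obtained by tracking the gap through each step. I would start from Håstad's hardness of \textsc{Max-E3-Lin-2}: for every $\delta>0$ it is NP-hard to distinguish systems in which a $(1-\delta)$ fraction of equations is satisfiable from systems in which at most a $(1/2+\delta)$ fraction is.

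The first step bounds the number of occurrences of each variable. I would replace every variable by a copy per occurrence and link the copies with a Berman--Karpinski amplifier graph. An amplifier is a bounded-degree expander-like gadget in which any assignment that is inconsistent across copies can be made consistent without losing satisfied equations. The result is a gap version of \textsc{Max-E3-Lin-2} with bounded occurrences, where both thresholds are explicit functions of $\delta$ and of the amplifier parameters.

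The second step is the gadget reduction to vertex cover. Each equation $x\oplus y\oplus z=b$ becomes a small graph gadget: one copy of the $4$ satisfying local assignments, each a triangle or clique piece. Consistency edges join the gadgets according to variable occurrences, with the amplifier supplying the consistency structure. The key steps are the two directions of the gap. Completeness: a near-satisfying assignment yields a cover of size about $c_1 m$. Soundness: from any vertex cover I would extract an assignment by local rounding, so that every cover has size at least about $c_2 m$ whenever at most a $(1/2+\delta)$ fraction of equations is satisfiable. The ratio $c_2/c_1$ must then be computed.

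The third step makes the graph simple and exactly $4$-regular without destroying the gap. Vertices of degree less than $4$ would be padded with attached gadgets whose contribution to $\tau$ is known exactly, for example disjoint copies of $K_5$ minus a perfect-matching structure, or small $4$-regular pieces glued in through degree-deficient vertices. These pieces add a fixed known amount to $\tau$ and therefore only dilute the ratio by a computable amount.

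The main obstacle is the soundness analysis together with the precise constant bookkeeping. Proving that every vertex cover can be converted into an assignment whose cost is not much worse requires the amplifier's expansion property, and achieving exactly $53/52$ after the regularization padding requires an optimized choice of amplifier parameters. I would first settle for some constant $1+\varepsilon$, which is all that is needed for the $(1+\varepsilon)$-inapproximability of $s(w)$ in the following reduction. Only then would I refine the accounting toward the stated $53/52$.
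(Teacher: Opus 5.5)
The paper does not prove this statement. It imports Corollary~18 of Chleb\'{\i}k and Chleb\'{\i}kov\'a verbatim and uses it as a black box. Judged as a proof of the statement as written, your proposal has a genuine gap: it never derives the constant $53/52$, which you explicitly defer to a later ``refined accounting''. That constant matters in this paper. The factor $2757/2756$ in \Cref{thm:ses_np_hardness} is exactly the solution of $\rho + 52(\rho-1) = 53/52$. The reduction also needs the hard instances to be simple and exactly $4$-regular: it uses $m=2n$, the four gadgets $s_{v,1},\dots,s_{v,4}$ per vertex, and $n \le 2\tau(G)$. An unspecified $1+\varepsilon$ would recover only the qualitative claim of the abstract, and only if you establish it on that exact graph class.

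Moreover, the chain as sketched does not reach that class even for some unspecified constant. In the clique gadget you describe, each satisfying-assignment vertex already has three neighbours inside its clique on the four satisfying assignments. It also needs conflict edges for each of its three variables to enforce consistency, so its degree exceeds $4$. Your third step only pads degrees upward and cannot repair this. Reaching degree $4$ requires a degree-aware encoding. The literature, following Berman and Karpinski, goes through a bounded-occurrence hybrid system with amplifier equations $x\oplus y=0$ and purpose-built low-degree gadgets whose cover cost is analysed exactly; that is where the work behind $53/52$ lies.

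The padding step is also unjustified. $K_5$ has odd order, so ``$K_5$ minus a perfect matching'' does not exist. More importantly, an attached gadget contributes an amount to $\tau$ that is independent of the rest of the cover exactly when some minimum cover of the gadget contains all its attachment vertices. For example, attach $K_5$ minus an edge through its two degree-$3$ vertices. It then costs $3$ if both outside neighbours are in the cover and $4$ otherwise, so it interacts with the cover of the main graph. You would have to exhibit a padding gadget without this interaction, or handle the interaction, and then track the resulting dilution of the gap.
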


Let $G = (V, E)$ be a simple $4$-regular graph with $n = |V|$ vertices and $m = |E| = 2n$ edges.
We identify $V$ with $[n]$.
For each vertex $v \in V$, fix an arbitrary ordering of the four edges incident to $v$ and denote them by $e_1(v), e_2(v), e_3(v)$, and $e_4(v)$.
We define the alphabet $\Sigma_G = \{\S{v}, \T{v} \mid v \in V\} \cup \{\$_{v,i}, \#_{v,i} \mid v \in V, 1 \leq i \leq 4\} \cup \{\X{e}, \A{e}, \B{e} \mid e \in E\}$.
The symbols $\S{v}$ and $\T{v}$ mark the beginnings of gadget strings, while $\A{e}$ and $\B{e}$ mark their ends.
These boundary markers control the possible substring equalities.
The symbol $\X{e}$ appears in the gadgets associated with both endpoints of $e$.
Connecting its occurrences will encode that $e$ is covered.
The symbols $\$_{v,i}$ and $\#_{v,i}$ are separators.
All separator symbols are distinct, and each occurs exactly once in the string constructed below.
The alphabet has size $|\Sigma_G| = 10n + 3m = 16n$.
Over this alphabet, we define gadget strings $s_{v,i}$ and $t_{v,i}$ for every $v \in V$ and $1 \leq i \leq 4$.
We then construct two intermediate strings $S_G$ and $T_G$ and the final string $w_G$ as follows.
\begin{align*}
    s_{v,i} &= \S{v}\X{e_1(v)} \cdots \X{e_i(v)}\A{e_i(v)}, \\
    t_{v,i} &= \T{v}\X{e_1(v)} \cdots \X{e_i(v)}\B{e_i(v)}, \\
    S_G &=
    \bigl(s_{1,1}\$_{1,1} s_{1,2}\$_{1,2} \cdots s_{1,4}\$_{1,4}\bigr)
    \cdots
    \bigl(s_{n,1}\$_{n,1} s_{n,2}\$_{n,2} \cdots s_{n,4}\$_{n,4}\bigr), \\
    T_G &=
    \bigl(t_{1,1}\#_{1,1} t_{1,2}\#_{1,2} \cdots t_{1,4}\#_{1,4}\bigr)
    \cdots
    \bigl(t_{n,1}\#_{n,1} t_{n,2}\#_{n,2} \cdots t_{n,4}\#_{n,4}\bigr), \\
    w_G &= S_G T_G.
\end{align*}
For every $v \in V$ and $1 \leq i \leq 4$, both $s_{v,i}$ and $t_{v,i}$ have length $i + 2$.
Hence the four gadget strings associated with one vertex have total length $\sum_{i=1}^4 (i + 2) = 18$ in each of $S_G$ and $T_G$.
Each intermediate string also contains four separator symbols per vertex.
It follows that $|S_G| = |T_G| = (18 + 4)n = 22n$ and $|w_G| = 44n$.

\begin{lemma}\label{lem:maximal_equations}
The maximal substring equations of length at least $2$ between distinct occurrences in $w_G$ are exactly the following five families.
\begin{itemize}
\item $\EqS$ consists of the equations between the occurrences of $\S{v}\X{e_1(v)} \cdots \X{e_i(v)}$ in $s_{v,i}$ and $s_{v,j}$ for every $v \in V$ and $1 \leq i < j \leq 4$.
\item $\EqT$ consists of the equations between the occurrences of $\T{v}\X{e_1(v)} \cdots \X{e_i(v)}$ in $t_{v,i}$ and $t_{v,j}$ for every $v \in V$ and $1 \leq i < j \leq 4$.
\item For each edge $e = \{u, v\}$, let $i$ and $j$ be the indices satisfying $e_i(u) = e_j(v) = e$.
The family $\EqA$ contains the equation between the suffixes $\X{e}\A{e}$ of $s_{u,i}$ and $s_{v,j}$.
\item For each edge $e = \{u, v\}$, let $i$ and $j$ be the indices satisfying $e_i(u) = e_j(v) = e$.
The family $\EqB$ contains the equation between the suffixes $\X{e}\B{e}$ of $t_{u,i}$ and $t_{v,j}$.
\item Fix $v \in V$ and $2 \leq i, j \leq 4$, and let $k = \min\{i, j\}$.
The family $\EqXv{v}$ contains the equation between the occurrences of $\X{e_1(v)} \cdots \X{e_k(v)}$ in $s_{v,i}$ and $t_{v,j}$.
We set $\EqX = \bigcup_{v \in V} \EqXv{v}$.
\end{itemize}
\end{lemma}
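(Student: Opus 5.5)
The plan is to classify every repeated substring of length at least $2$ in $w_G$ and then determine which pairs of occurrences are maximal. Since each separator $\$_{v,i}$ and $\#_{v,i}$ occurs exactly once, no repeated substring contains a separator. So every repeated substring lies inside a single gadget $s_{v,i}$ or $t_{v,i}$. The same argument applies to any symbol occurring once in $w_G$: a maximal equation must consist of two occurrences of a common substring whose symbols all occur at least twice.

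The first step is to tabulate the occurrences of each symbol. $\S{v}$ occurs exactly four times, at the starts of $s_{v,1},\dots,s_{v,4}$, and $\T{v}$ likewise in the $t_{v,i}$. $\A{e}$ for $e=\{u,v\}$ occurs exactly twice, at the ends of $s_{u,i}$ and $s_{v,j}$ where $e_i(u)=e_j(v)=e$; similarly $\B{e}$ occurs twice, in $T_G$. The occurrences of $\X{e}$ are in the gadgets $s_{u,k},t_{u,k}$ for $k\ge i$ and $s_{v,k},t_{v,k}$ for $k\ge j$. Within a gadget, every symbol is distinct, because the four edges at $v$ are distinct (the graph is simple).

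The key step is a case analysis on a common substring $y$ of length at least $2$ occurring at two distinct positions in gadgets $g$ and $g'$.
\begin{itemize}
\item If $y$ contains $\S{v}$, it is a prefix of both $g$ and $g'$, so $g=s_{v,i}$ and $g'=s_{v,j}$ with $i\ne j$. The maximal common prefix is then $\S{v}\X{e_1(v)}\cdots\X{e_{\min(i,j)}(v)}$, because the next symbols differ: one is $\A{\cdot}$ and the other is $\X{\cdot}$ or a different $\A{}$. This gives $\EqS$, and $\EqT$ follows symmetrically.
\item If $y$ contains $\A{e}$, it is a suffix of both gadgets. The symbol preceding $\A{e}$ is $\X{e}$ in both, and the symbol before that is $\X{e_{i-1}(u)}$ versus $\X{e_{j-1}(v)}$ (or $\S{u}$ versus $\S{v}$). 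These differ, since equality would force a common edge at two different positions, and simplicity together with $u\ne v$ gives $e_{i-1}(u)\ne e_{j-1}(v)$. The right side is blocked by distinct separators. This gives $\EqA$, and $\EqB$ follows symmetrically.
\item Otherwise $y$ consists only of $\X{}$ symbols. Consecutive $\X{}$ symbols $\X{e}\X{f}$ with $e\ne f$ both incident to $v$ occur only in gadgets of $v$, since two distinct edges share at most one endpoint. Hence both occurrences lie in gadgets of the same vertex $v$, where the $\X{}$ block is always the prefix $\X{e_1(v)}\cdots\X{e_k(v)}$ starting right after $\S{v}$ or $\T{v}$. Both occurrences therefore start at offset $2$ within their gadget, and they extend left unless the preceding symbols agree.
\end{itemize}

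In the last case, if both gadgets are $s$-gadgets (or both $t$-gadgets), the preceding $\S{v}$ symbols agree, so the equation is not maximal and is subsumed by $\EqS$ or $\EqT$. In the mixed case $s_{v,i}$ versus $t_{v,j}$, the preceding symbols $\S{v}$ and $\T{v}$ differ, and the equation extends right exactly to length $\min(i,j)$. The requirement of length at least $2$ forces $i,j\ge2$. This gives $\EqXv{v}$.

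Conversely, I will check that each listed equation is indeed maximal, both on the left and on the right, using the same distinctness facts. The main obstacle is the pure-$\X{}$ case: I must carefully justify that a length-$\ge2$ run of $\X{}$ symbols pins down the vertex, using simplicity of $G$. This is also the only place that needs $v$-specific edge ordering, so it requires the most care.
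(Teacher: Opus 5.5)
Your proposal follows the paper's proof essentially step for step. Separators confine every repeated substring to a single gadget. The case split is then into substrings containing $\S{v}$ or $\T{v}$, substrings containing $\A{e}$ or $\B{e}$, and substrings made only of edge symbols. Simplicity is used twice: for the left boundary of $\EqA$ and $\EqB$, and to pin both occurrences of a length-$\geq 2$ edge-symbol string to gadgets of one vertex. That is exactly the paper's argument, including the converse maximality check.

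One step is misstated, however. In the pure-edge-symbol case you conclude that both occurrences of $y$ ``start at offset $2$ within their gadget,'' which is false for an arbitrary common substring $y$. For example, $\X{e_2(v)}\X{e_3(v)}$ occurs at offset $3$ in both $s_{v,3}$ and $s_{v,4}$.

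What the fixed order $\X{e_1(v)}\cdots\X{e_4(v)}$ in every gadget of $v$ actually gives is alignment. Both occurrences start at the same offset $h+1$, where $\X{e_h(v)}$ is the first symbol of $y$; this is the step the paper states explicitly. If $h>1$, both occurrences are preceded by $\X{e_{h-1}(v)}$, so the pair is not left-maximal. Hence a maximal pair in this case starts at offset $2$, and your subsequent analysis applies: the $s/s$ and $t/t$ cases extend into $\EqS$ or $\EqT$, and the mixed case gives $\EqXv{v}$.

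Relatedly, ``they extend left unless the preceding symbols agree'' should read ``if.''
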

\begin{proof}
Every equation in these families is maximal.
For an equation in $\EqS$ or $\EqT$, the characters following the common prefix differ, and distinct separators prevent an extension to the left whenever both preceding characters exist.
For an equation in $\EqA$ or $\EqB$, the distinct separators prevent an extension to the right.
The characters preceding $\X{e}$ are either distinct initial symbols or the symbols of two different edges, since another common edge would violate simplicity of $G$.
An equation in $\EqX$ cannot be extended to the left because $\S{v}$ and $\T{v}$ differ, and the characters following the shorter edge-symbol sequence also differ.

We next show that these families contain every possible maximal equation of length at least $2$.
No maximal equation of length at least $2$ crosses a gadget boundary because every separator occurs exactly once in $w_G$.
Thus, both sides of such an equation are contained in gadget strings.
An equation containing $\S{v}$ or $\T{v}$ consists of the longest common prefixes of two gadgets associated with $v$.
It therefore belongs to $\EqS$ or $\EqT$.
Each $\A{e}$ occurs only at the ends of the two gadgets associated with the endpoints of $e$.
The common suffix of these gadgets is exactly $\X{e}\A{e}$ because simplicity of $G$ rules out another edge between the same two vertices.
The same argument applies to $\B{e}$, giving the equations in $\EqA$ and $\EqB$.

It remains to consider equations containing only symbols of the form $\X{e}$.
Suppose that a common substring of length at least $2$ occurs in gadgets associated with distinct vertices $u$ and $v$.
Its first two symbols correspond to two distinct edges incident to both $u$ and $v$, which contradicts simplicity of $G$.
Hence, the two occurrences belong to gadgets associated with the same vertex $v$.
Since the edge symbols are distinct, $\X{e_h(v)}$ always has offset $h$ within the edge-symbol sequence of every gadget associated with $v$ that contains it.
Thus, the two sides of the equation align the same offsets.
If both sides lie in $s$-gadgets, their maximal common substring is one of the prefixes in $\EqS$.
If both sides lie in $t$-gadgets, it is one of the prefixes in $\EqT$.
The remaining case has one side in an $s$-gadget and the other in a $t$-gadget.
Their maximal common substring starts with $\X{e_1(v)}$ and ends with the last edge symbol of the shorter gadget, giving an equation in $\EqXv{v}$.
\end{proof}

Together with \Cref{obs:maximal_ses}, \Cref{lem:maximal_equations} implies that there exists a minimum-size SES for $w_G$ whose set $\Eq$ of substring-equality constraints is contained in $\EqS \cup \EqT \cup \EqA \cup \EqB \cup \EqX$.
For each $v \in V$ and $1 \leq i \leq 3$, let $\EqS^\circ(v, i)$ denote the equation between the occurrences of $\S{v}\X{e_1(v)} \cdots \X{e_i(v)}$ in $s_{v,i}$ and $s_{v,i+1}$.
Define $\EqT^\circ(v, i)$ analogously for $t_{v,i}$ and $t_{v,i+1}$.
We write $\EqS^\circ = \{\EqS^\circ(v, i) \mid v \in V, 1 \leq i \leq 3\}$ and $\EqT^\circ = \{\EqT^\circ(v, i) \mid v \in V, 1 \leq i \leq 3\}$.
For each $v \in V$, let $\EqX^\circ(v)$ denote the equation between the occurrences of $\X{e_1(v)} \cdots \X{e_4(v)}$ in $s_{v,4}$ and $t_{v,4}$.
We write $\EqX^\circ = \{\EqX^\circ(v) \mid v \in V\}$.
For $C \subseteq V$, define
\[
\Eq(C) = \EqS^\circ \cup \EqT^\circ \cup \EqA \cup \EqB \cup \{\EqX^\circ(v) \mid v \in C\}.
\]
The following lemma describes the connected components of $H_{\Eq(\emptyset)}$.
\begin{lemma}\label{lem:eq_empty_components}
The connected components of $H_{\Eq(\emptyset)}$ have the following structure.
\begin{itemize}
\item For every $e \in E$, all occurrences of $\X{e}$ in $S_G$ form one component, and all occurrences of $\X{e}$ in $T_G$ form another component.
These two components are distinct.
\item For every character $c \in \Sigma_G \setminus \{\X{e} \mid e \in E\}$, all occurrences of $c$ form one component.
\end{itemize}
\end{lemma}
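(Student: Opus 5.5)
The proof has two halves. First we show that the listed sets really are connected, by exhibiting explicit paths in $H_{\Eq(\emptyset)}$. Then we show that nothing else is joined, using the fact that every edge of the graph links two positions carrying the same character and that no edge crosses between $S_G$ and $T_G$. Recall that $\Eq(\emptyset) = \EqS^\circ \cup \EqT^\circ \cup \EqA \cup \EqB$. Each of these equations has both sides inside $S_G$ (for $\EqS^\circ$, $\EqA$) or both inside $T_G$ (for $\EqT^\circ$, $\EqB$). Also, every edge of the referencing graph joins two positions with equal characters, since $w_G$ satisfies the equations. So every component consists of positions with one common character, and it lies entirely in $S_G$ or entirely in $T_G$. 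This already shows that the $S_G$-component and the $T_G$-component of $\X{e}$ are distinct, provided we show each is connected.

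Next we establish connectivity by going through the characters one at a time.

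For $\S{v}$, its four occurrences are the first characters of $s_{v,1},\dots,s_{v,4}$. The chain $\EqS^\circ(v,1),\EqS^\circ(v,2),\EqS^\circ(v,3)$ joins the start of $s_{v,i}$ to the start of $s_{v,i+1}$, so all four are connected. The same argument with $\EqT^\circ$ handles $\T{v}$.

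The separators $\$_{v,i}$ and $\#_{v,i}$ occur only once, so each forms a singleton component.

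For $\A{e}$ with $e=\{u,v\}$, it occurs exactly twice, at the ends of $s_{u,i}$ and $s_{v,j}$. The $\EqA$ equation joins these two occurrences. The same holds for $\B{e}$ via $\EqB$.

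For $\X{e}$ inside $S_G$ with $e=\{u,v\}$, $e_i(u)=e_j(v)=e$, the occurrences are at offset $i$ in $s_{u,i},\dots,s_{u,4}$ and at offset $j$ in $s_{v,j},\dots,s_{v,4}$. We connect them in three steps.
\begin{itemize}
\item The equation $\EqS^\circ(u,h)$ covers $\X{e_1(u)}\cdots\X{e_h(u)}$. So for $h\ge i$ it links the $\X{e}$ in $s_{u,h}$ to the $\X{e}$ in $s_{u,h+1}$. The chain $h=i,\dots,3$ therefore connects all occurrences of $\X{e}$ in $s_{u,\cdot}$.
\item The same chain for $v$ connects all occurrences of $\X{e}$ in $s_{v,\cdot}$.
\item The $\EqA$ equation for $e$ covers the suffixes $\X{e}\A{e}$ of $s_{u,i}$ and $s_{v,j}$, and so joins the two groups.
\end{itemize}
The symmetric argument with $\EqT^\circ$ and $\EqB$ handles the occurrences of $\X{e}$ in $T_G$.

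The main obstacle is the index bookkeeping: checking that $\EqS^\circ(u,h)$ really covers the $\X{e}$ position when $h\ge i$, and that the $\EqA$ equation aligns the two $\X{e}$'s rather than shifting them. Both follow from the gadget definitions. Combining the two halves, the component of any position is exactly the set claimed in the lemma.
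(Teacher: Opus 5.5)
Your proof is correct and follows essentially the same route as the paper. Monochromatic edges and the absence of any equation linking $S_G$ to $T_G$ keep distinct characters, and the two halves of $\X{e}$, apart; chains of $\EqS^\circ$ and $\EqT^\circ$ equations glued by the $\EqA$ and $\EqB$ equations give connectivity, and your explicit check that $\EqS^\circ(u,h)$ covers the $\X{e}$ position exactly when $h \geq i$ just spells out a step the paper states in one line.
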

\begin{proof}
Every edge of $H_{\Eq(\emptyset)}$ joins two positions containing the same character.
For each $e \in E$, the equations in $\EqS^\circ$ connect the occurrences of $\X{e}$ within the $s$-gadgets associated with each endpoint of $e$.
The corresponding equation in $\EqA$ connects the occurrences associated with the two endpoints.
Thus, all occurrences of $\X{e}$ in $S_G$ form one component.
The same argument with $\EqT^\circ$ and $\EqB$ gives one component containing all occurrences of $\X{e}$ in $T_G$.
No equation in $\Eq(\emptyset)$ has one side in $S_G$ and the other in $T_G$, so these two components are distinct.

For each $v \in V$, the equations in $\EqS^\circ$ connect all occurrences of $\S{v}$, and the equations in $\EqT^\circ$ connect all occurrences of $\T{v}$.
For each $e \in E$, the corresponding equation in $\EqA$ connects the two occurrences of $\A{e}$, and the corresponding equation in $\EqB$ connects the two occurrences of $\B{e}$.
Every separator occurs exactly once and therefore forms an isolated component.
The characters considered in this paragraph constitute $\Sigma_G \setminus \{\X{e} \mid e \in E\}$, which proves the second item.
\end{proof}

We next use this component structure to give a more restricted normal form for the substring equations in a minimum-size SES.
\begin{lemma}\label{lem:ses_normal_form}
There exist a minimum-size SES $(|w_G|, \Eq, \Ch)$ representing $w_G$ and a set $C \subseteq V$ such that
\[
\Eq = \EqS^\circ \cup \EqT^\circ \cup \EqA \cup \EqB \cup \{\EqX^\circ(v) \mid v \in C\}.
\]
\end{lemma}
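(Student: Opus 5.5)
Start from a minimum-size SES whose equations are all maximal and of length at least $2$; it exists by \Cref{obs:maximal_ses}. By \Cref{lem:maximal_equations}, its equation set $\Eq$ lies in $\EqS \cup \EqT \cup \EqA \cup \EqB \cup \EqX$. Since the number of character assignments needed is exactly $\kappa(\Eq)$, the size is $|\Eq| + \kappa(\Eq)$. The plan is a sequence of exchange steps, each of which leaves $|\Eq| + \kappa(\Eq)$ unchanged or smaller, ending in a set of the form $\Eq(C)$. Throughout I track the partition $\mathcal{C}(H_\Eq)$.

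\textbf{Step 1: the $\mathsf{S}$-part.} Fix a vertex $v$ and consider the equations of $\Eq \cap \EqS$ for $v$. These link the four gadgets $s_{v,1},\dots,s_{v,4}$. Each equation connects two gadgets at the common-prefix offsets, and offset $h$ exists only in gadgets $s_{v,j}$ with $j \ge h-1$. I would use a matroid-style count: the components touching the $\S{v}$-prefix region of the $s$-gadgets of $v$ can be merged by at most $3$ equations. The chain $\EqS^\circ(v,1),\EqS^\circ(v,2),\EqS^\circ(v,3)$ achieves the finest possible merging. Indeed, it connects all occurrences of $\S{v}$ and, for every $h$, all occurrences of $\X{e_h(v)}$ within the $s$-gadgets of $v$; every equation in $\EqS$ for $v$ only joins positions in these classes. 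Let $k$ be the number of $\EqS$-equations for $v$ in $\Eq$. If $k \ge 3$, replace them by the chain; this keeps or merges components and does not increase $|\Eq|$. If $k < 3$, the graph on the four gadgets induced by these equations is disconnected. Then some occurrence of $\S{v}$ is not joined to the others by $\EqS$, and no other family touches $\S{v}$. So each missing chain edge costs one extra component, and adding the chain adds $3-k$ equations while removing at least $3-k$ components. The $\mathsf{T}$-part is handled symmetrically with $\EqT^\circ$.

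\textbf{Step 2: $\EqA$ and $\EqB$.} If an equation of $\EqA$ is missing, the two occurrences of $\A{e}$ lie in different components, because only that equation contains $\A{e}$. Adding it therefore does not increase the size. The same holds for $\EqB$.

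\textbf{Step 3: $\EqX$, the main obstacle.} After Steps 1 and 2, the components are exactly those of \Cref{lem:eq_empty_components}, coarsened by the $\EqX$-equations present. Every equation in $\EqXv{v}$ joins the $S_G$-class of some $\X{e_h(v)}$ with its $T_G$-class, for all $h \le k$. The longest one, $\EqX^\circ(v)$, does this for all four edges at $v$, so it dominates every other equation of $\EqXv{v}$. Let $C$ be the set of vertices $v$ with $\Eq \cap \EqXv{v} \neq \emptyset$. Replace all of $\Eq \cap \EqXv{v}$ by the single equation $\EqX^\circ(v)$. This does not increase $|\Eq|$, and it merges at least as much as before, since each $\X{e}$ has exactly one $S_G$-class and one $T_G$-class, so the merging is determined only by which edges are touched. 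The delicate point is the claim that the resulting component structure depends only on the set of touched edge symbols. I would verify it directly from \Cref{lem:eq_empty_components}. After this replacement, $\Eq = \Eq(C)$, and the size has not increased, so the resulting SES is still of minimum size.
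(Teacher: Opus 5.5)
Your proposal is correct and follows essentially the same route as the paper. You restrict to maximal equations via \Cref{obs:maximal_ses} and \Cref{lem:maximal_equations}, replace each vertex's $\EqS$/$\EqT$ equations by the three-equation chain (counting components among the four occurrences of $\S{v}$ and $\T{v}$), include all of $\EqA \cup \EqB$, and then collapse each nonempty $\Eq \cap \EqXv{v}$ to $\EqX^\circ(v)$ using the component structure of \Cref{lem:eq_empty_components}. Your exchange steps from a fixed minimum SES are the same argument as the paper's family-by-family optimization of $|\mathsf{F}| + \kappa(\mathsf{F})$; one small slip is that the chain gives the coarsest possible merging, not the finest.
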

\begin{proof}
\Cref{obs:maximal_ses} and \Cref{lem:maximal_equations} allow us to restrict the substring equations to a subset $\mathsf{F}$ of $\EqS \cup \EqT \cup \EqA \cup \EqB \cup \EqX$.
For every fixed $\mathsf{F}$, the minimum number of character assignments is $\kappa(\mathsf{F})$.
Hence, $s(w_G)$ is the minimum of $|\mathsf{F}| + \kappa(\mathsf{F})$ over all such subsets $\mathsf{F}$.

Fix $v \in V$ and consider the four occurrences of $\S{v}$ in $s_{v,1}, \ldots, s_{v,4}$.
An equation associated with $v$ in $\EqS$ connects two of these occurrences, and no equation in another family connects them.
Therefore, connecting all four occurrences requires at least three equations.
The equations $\EqS^\circ(v, 1), \EqS^\circ(v, 2)$, and $\EqS^\circ(v, 3)$ attain this bound and induce the same connected-component relation on the affected positions as all equations in $\EqS$ associated with $v$.
Indeed, they connect all four occurrences of $\S{v}$ and, for every $1 \leq h \leq 4$, all occurrences of $\X{e_h(v)}$ among the $s$-gadgets containing that symbol.
For a choice of $r < 3$ equations, $\kappa(\mathsf{F})$ is at least $3-r$ larger than when the three equations above are used.
Using more than three equations does not change $\kappa(\mathsf{F})$.
The same argument applies to the occurrences of $\T{v}$.
Therefore, $\EqS^\circ \cup \EqT^\circ$ is an optimal choice among the equations in $\EqS \cup \EqT$.

For each $e \in E$, the two occurrences of $\A{e}$ can be connected only by the corresponding equation in $\EqA$.
Choosing this equation increases $|\mathsf{F}|$ by one and decreases $\kappa(\mathsf{F})$ by at least one, so it does not increase $|\mathsf{F}| + \kappa(\mathsf{F})$.
The same argument applies to the two occurrences of $\B{e}$ and the corresponding equation in $\EqB$.
Thus, an optimal choice may include all equations in $\EqA \cup \EqB$.

The component structure in \Cref{lem:eq_empty_components} determines an optimal choice among the equations in $\EqX$.
Fix $v \in V$.
Every equation in $\EqXv{v}$ corresponds to a prefix of $\X{e_1(v)} \cdots \X{e_4(v)}$.
For each symbol in this prefix, the equation connects the component containing its occurrences in $S_G$ to the component containing its occurrences in $T_G$.
The equation $\EqX^\circ(v)$ connects these two components for each of the four symbols $\X{e_1(v)}, \ldots, \X{e_4(v)}$.
Thus, choosing $\EqX^\circ(v)$ alone induces the same connectivity among text positions as choosing all equations in $\EqXv{v}$.
It follows that an optimal choice among the equations in $\EqX$ has the form $\{\EqX^\circ(v) \mid v \in C\}$ for some $C \subseteq V$.
Combining these choices gives an optimal $\mathsf{F}$ of the form stated in the lemma.
The $\kappa(\mathsf{F})$ character assignments described above then give the required minimum-size SES.
\end{proof}

\Cref{lem:ses_normal_form} shows that the substring equations of a minimum-size SES can be chosen in this form.
The component structure in \Cref{lem:eq_empty_components} gives an exact expression for the minimum size of an SES representing $w_G$.
\begin{theorem}\label{thm:ses_vertex_cover_relation}
For every simple $4$-regular graph $G$ with $n$ vertices, the string $w_G$ satisfies $s(w_G) = 26n + \tau(G)$.
\end{theorem}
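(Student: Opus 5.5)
By \Cref{lem:ses_normal_form}, I would reduce the theorem to a counting problem: $s(w_G) = \min_{C \subseteq V} \bigl(|\Eq(C)| + \kappa(\Eq(C))\bigr)$. The lemma shows that some minimum-size SES has $\Eq = \Eq(C)$ for some $C$, and for a fixed equation set exactly $\kappa$ character assignments are necessary and sufficient. Conversely, each $\Eq(C)$ together with $\kappa(\Eq(C))$ assignments is a valid SES, so the minimum is attained and is not undercut. The plan is to compute both terms explicitly as functions of $C$, and then identify the minimum with $\tau(G)$.

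\textbf{Counting equations.} $\EqS^\circ$ and $\EqT^\circ$ each contain $3n$ equations. $\EqA$ and $\EqB$ each contain $m = 2n$ equations, one per edge. The set $C$ contributes $|C|$ equations. Hence $|\Eq(C)| = 10n + |C|$.

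\textbf{Counting components.} Start from \Cref{lem:eq_empty_components}. Each of the $|\Sigma_G| - m = 14n$ characters other than the $\X{e}$ forms a single component. Each edge $e$ contributes two components: the occurrences of $\X{e}$ in $S_G$ and those in $T_G$. Adding $\EqX^\circ(v)$ only creates graph edges between occurrences of $\X{e_h(v)}$ in $s_{v,4}$ and $t_{v,4}$, for $h \in [4]$. So it merges the $S_G$- and $T_G$-components of $\X{e}$ exactly for the edges $e$ incident to $v$, and touches nothing else. Therefore the two components of $\X{e}$ merge into one precisely when $e$ has an endpoint in $C$. Writing $u(C)$ for the number of edges not covered by $C$, this gives $\kappa(\Eq(C)) = 14n + m + u(C) = 16n + u(C)$. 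Consequently
\[
s(w_G) = 26n + \min_{C \subseteq V} \bigl(|C| + u(C)\bigr).
\]

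\textbf{Identifying the minimum with $\tau(G)$.} Taking $C$ to be a minimum vertex cover gives $|C| + u(C) = \tau(G)$. Conversely, for any $C$, adding one endpoint of each uncovered edge yields a vertex cover of size at most $|C| + u(C)$, so $|C| + u(C) \geq \tau(G)$. Hence the minimum equals $\tau(G)$, and $s(w_G) = 26n + \tau(G)$.

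\textbf{Main obstacle.} The only delicate step is justifying that $\EqX^\circ(v)$ merges exactly the components of the four $\X{e_h(v)}$ and nothing more. This follows because the equation aligns identical edge symbols, so every graph edge it adds joins an $S_G$-occurrence and a $T_G$-occurrence of the same $\X{e_h(v)}$. The remaining steps are bookkeeping on top of \Cref{lem:ses_normal_form} and \Cref{lem:eq_empty_components}.
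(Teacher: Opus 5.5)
Your proposal is correct and follows essentially the same route as the paper: you count $|\Eq(C)| = 10n + |C|$, use \Cref{lem:eq_empty_components} to get $\kappa(\Eq(C)) = 16n + u(C)$ (your $u(C)$ is the paper's $\lambda(C)$), and identify $\min_{C}(|C| + u(C))$ with $\tau(G)$ by the same augment-to-a-cover argument. You also note explicitly that every $\Eq(C)$ with $\kappa(\Eq(C))$ character assignments is itself a valid SES for $w_G$, which secures the upper-bound direction of the minimum; the paper leaves this point implicit.
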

\begin{proof}
For $C \subseteq V$, let $\lambda(C)$ be the number of edges with no endpoint in $C$.
The four fixed families $\EqS^\circ$, $\EqT^\circ$, $\EqA$, and $\EqB$ in $\Eq(C)$ contain $3n$, $3n$, $m$, and $m$ equations, respectively.
Since $m = 2n$, we have $|\Eq(C)| = 10n + |C|$.

Since the equations in $\EqX^\circ$ contain only edge symbols, \Cref{lem:eq_empty_components} shows that every other character has one component in $H_{\Eq(C)}$.
For every edge $e = \{u, v\}$, an equation $\EqX^\circ(x)$ merges the $S_G$ and $T_G$ components of $\X{e}$ if and only if $x \in \{u, v\}$.
Therefore, the occurrences of $\X{e}$ belong to one component if $e$ has an endpoint in $C$ and to two components otherwise.
Since $|\Sigma_G| = 16n$, it follows that $\kappa(\Eq(C)) = 16n + \lambda(C)$.
By \Cref{lem:ses_normal_form}, there exists a minimum-size SES whose set of substring equations is $\Eq(C)$ for some $C \subseteq V$.
For each $C$, the minimum number of character assignments is $\kappa(\Eq(C))$.
Therefore,
\begin{align*}
s(w_G)
&= \min_{C \subseteq V} \bigl(|\Eq(C)| + \kappa(\Eq(C))\bigr) \\
&= \min_{C \subseteq V} \bigl((10n + |C|) + (16n + \lambda(C)) \bigr) \\
&= 26n + \min_{C \subseteq V} \bigl(|C| + \lambda(C)\bigr).
\end{align*}

For every $C \subseteq V$, adding one endpoint of each edge counted by $\lambda(C)$ produces a vertex cover of size at most $|C| + \lambda(C)$.
Hence, $\tau(G) \leq |C| + \lambda(C)$.
Conversely, a minimum vertex cover $C$ satisfies $|C| = \tau(G)$ and $\lambda(C) = 0$.
Thus, $\min_{C \subseteq V} (|C| + \lambda(C)) = \tau(G)$, which proves the theorem.
\end{proof}

We now obtain the computational hardness of the minimum SES size.
\begin{theorem}\label{thm:ses_np_hardness}
Computing $s(w)$ is NP-hard.
Moreover, it is NP-hard to approximate $s(w)$ within a factor of $2757/2756$.
\end{theorem}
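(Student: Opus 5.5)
The plan is a gap-preserving reduction from \prob{minimum vertex cover} on simple $4$-regular graphs, using \Cref{thm:vertex_cover_hardness} together with the exact identity $s(w_G) = 26n + \tau(G)$ from \Cref{thm:ses_vertex_cover_relation}.

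\textbf{Reduction.} Given a simple $4$-regular graph $G$ with $n$ vertices, we build $w_G$ over the alphabet $\Sigma_G$. This takes polynomial time, since $|w_G| = 44n$ and $|\Sigma_G| = 16n$. Plain NP-hardness then follows immediately: computing $\tau(G)$ on these graphs is NP-hard by \Cref{thm:vertex_cover_hardness}, and $\tau(G) = s(w_G) - 26n$.

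\textbf{Gap.} Suppose an algorithm outputs $\tilde s$ with $s(w_G) \le \tilde s \le (1+\varepsilon)s(w_G)$. Set $\tilde\tau = \tilde s - 26n$. Then
\[
\tau(G) \le \tilde\tau \le \tau(G) + \varepsilon\,(26n + \tau(G)).
\]
We bound $n$ in terms of $\tau(G)$. In a $4$-regular graph each cover vertex covers at most $4$ of the $m = 2n$ edges, so $\tau(G) \ge n/2$, that is, $n \le 2\tau(G)$. This gives
\[
\tilde\tau \le \tau(G)\,(1 + 53\varepsilon).
\]
Choosing $\varepsilon$ with $1 + 53\varepsilon = 53/52$, i.e.\ $\varepsilon = 1/2756$, yields a $53/52$-approximation for vertex cover. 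That is NP-hard by \Cref{thm:vertex_cover_hardness}, so approximating $s(w)$ within factor $2757/2756$ is NP-hard.

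\textbf{Main point to verify.} The delicate step is the constant. It depends on using the sharp bound $\tau(G) \ge n/2$, which holds because there are $2n$ edges and each vertex covers $4$ of them, rather than some weaker bound. The approximate value $\tilde\tau$ also need not correspond to an actual cover. This is harmless, because approximating the value $\tau(G)$ is what \Cref{thm:vertex_cover_hardness} rules out. If needed, a cover of size at most $\tilde\tau$ can also be recovered constructively via the normal form of \Cref{lem:ses_normal_form} and the $|C| + \lambda(C)$ argument.
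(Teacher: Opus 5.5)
Your proposal is correct and follows the paper's proof essentially step for step. You use the identity $s(w_G) = 26n + \tau(G)$ for exact hardness, the bound $n \le 2\tau(G)$ from $4$-regularity, and the same computation $1 + 53\varepsilon = 53/52$ with $\varepsilon = 1/2756$ (since $2756 = 52 \cdot 53$) to transfer the $53/52$ inapproximability of vertex cover.
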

\begin{proof}
The string $w_G$ has length $44n$ and can be constructed from $G$ in polynomial time.
\Cref{thm:ses_vertex_cover_relation} gives $\tau(G) = s(w_G) - 26n$.
Thus, the mapping $G \mapsto w_G$, followed by subtracting $26n$ from $s(w_G)$, is a polynomial-time reduction from \prob{minimum vertex cover} to computing $s(w)$.
This proves the first claim.

For the approximation claim, let $\rho = 2757/2756$ and let $\alpha$ be a value satisfying $s(w_G) \leq \alpha \leq \rho s(w_G)$.
Define $\widehat{\tau} = \alpha - 26n$.
\Cref{thm:ses_vertex_cover_relation} implies $\widehat{\tau} \geq \tau(G)$.
Every vertex in $G$ is incident to four edges, while $G$ has $2n$ edges.
Consequently, every vertex cover has size at least $n/2$, and hence $n \leq 2\tau(G)$.
We obtain
\begin{align*}
\widehat{\tau}
&\leq \rho\bigl(26n + \tau(G)\bigr) - 26n \\
&= \rho\tau(G) + 26(\rho - 1)n \\
&\leq \bigl(\rho + 52(\rho - 1)\bigr)\tau(G) \\
&= \frac{53}{52}\tau(G).
\end{align*}
Hence, the mappings $G \mapsto w_G$ and $\alpha \mapsto \alpha - 26n$ transform any value within a factor of $\rho$ of $s(w_G)$ into a value within a factor of $53/52$ of $\tau(G)$.
This gives a polynomial-time reduction from approximating \prob{minimum vertex cover} within a factor of $53/52$ on simple $4$-regular graphs to approximating $s(w)$ within a factor of $2757/2756$.
The second claim follows from this reduction and the hardness of approximating \prob{minimum vertex cover} stated in \Cref{thm:vertex_cover_hardness}.
\end{proof}
 
\section{Constant-Factor Equivalence Between SES and BMS}\label{se:bms_ses_equivalence}

Every BMS can be converted into an SES of the same size, and therefore $s(w) \leq b(w)$ for every string $w$.
We ask whether a converse bound holds up to a constant factor.
More precisely, we ask whether there exists a constant $\alpha > 0$ such that $\alpha \cdot b(w) \leq s(w)$ for every string $w$.
Otherwise, there exists an infinite sequence of strings $w_1, w_2, \ldots$ such that $s(w_k) / b(w_k) = o(1)$ as $k \to \infty$.

In this section, we prove the following theorem.
\begin{theorem} \label{thm:bms_ses_equivalence}
For every string $w$, $b(w) \leq 4s(w)$.
\end{theorem}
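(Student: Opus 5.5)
We would prove \Cref{thm:bms_ses_equivalence} by converting a minimum-size SES into a valid BMS. Start from an SES $(n,\Eq,\Ch)$ of size $s(w)$ as guaranteed by \Cref{obs:maximal_ses}: every equation is maximal and has length at least $2$, and $|\Ch| = \kappa(\Eq)$ with exactly one assignment per component of $H_\Eq$. We then spend a bounded number of BMS phrases per equation and per character assignment. The target is to charge at most $4$ phrases to each element of $\Eq\cup\Ch$; for instance, at most $3$ per equation and $1$ per assignment, or $2+2$ split between the two sides of each equation.

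\textbf{Step 1 (spanning forest).} Process the equations of $\Eq$ in a fixed order, for example by decreasing length, in a Kruskal-like fashion. An edge $\{i+h,j+h\}$ of an equation $(\{i,j\},\ell)$ is kept if it joins two different current components. The kept edges form a spanning forest $F$ of $H_\Eq$. Root each tree of $F$ at the unique position carrying a character assignment, and direct every edge towards the root. This yields a function $\varphi$ whose iterates always reach $\bot$, so the BMS is valid by construction. Each non-root position then points, via one specific equation, to its aligned partner.

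\textbf{Step 2 (counting phrases).} A phrase of the BMS is a maximal run of consecutive positions that point, in the same direction, through the same equation, with the same offset. Such a run is a legitimate pointer $(p,\ell)$, since equation edges are aligned. The number of phrases is therefore $|\Ch|$ plus the number of breaks, where a break is a place where consecutive positions switch to a different pointer. We must show that the breaks can be arranged to total at most about $3|\Eq|$.

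\textbf{Step 3 (controlling fragmentation; the main obstacle).} With a naive Kruskal order, the set of offsets $h$ whose edge is kept for a given equation can be highly non-contiguous, and the tree orientation can flip direction inside one equation. Either effect can create many phrases. The plan is to modify the forest by exchange arguments, using the freedom of choosing the spanning forest and the roots, so that for every equation the kept offsets form $O(1)$ intervals with constant direction.

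The first construction to try is to orient each equation from right to left, so that $j+h$ copies from $i+h$ for $i<j$. This direction is automatically acyclic because pointers strictly decrease positions. The remaining task is then to show that the positions left without an outgoing pointer can be covered cheaply, either by moving the character assignment of each component to a suitable position or by cutting an equation's interval at $O(1)$ points. For this we would use the maximality of the equations and a leftmost-occurrence argument in the spirit of \Cref{lem:unique_supermaximal}.

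If the charging works out, each equation contributes at most $2$ intervals on each side, and each assignment contributes $1$ explicit phrase. This gives $b(w)\le 4(|\Eq|+|\Ch|) = 4s(w)$. We expect Step 3 — proving that the number of pointer switches inside each equation's interval stays bounded while acyclicity and full coverage are preserved — to be the genuinely hard part of the argument.
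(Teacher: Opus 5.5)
\textbf{There is a genuine gap.} Step 3 is the entire content of the theorem, and you leave it as a plan. Worse, the construction you propose to try first provably fails.

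If every equation is oriented right to left, each copied phrase has a source starting strictly to its left. The resulting scheme is then a left-referencing (LZ77-type) parsing. By optimality of the greedy LZ77 parse, it has at least $z(w)$ phrases, however you handle the positions left without a pointer. For Fibonacci words $z(w)=\Theta(\log |w|)$, whereas $s(w)\le b(w)=O(1)$: they are episturmian, so even $\chi(w)=O(1)$. So no bound of the form $b(w)\le c\,s(w)$ can come out of this orientation. And once you let uncovered positions point to the right, the automatic acyclicity that motivated the choice is gone.

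The rooted-spanning-tree alternative in Steps 1--2 has the defects you yourself point out, and you give no exchange argument that bounds them. Kruskal-style edge selection fragments the kept offsets of an equation into many intervals. Orienting toward a root makes pointers switch from $+d$ to $-d$ inside each residue chain of a self-overlapping equation, at a place that depends on the root.

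\textbf{What the paper does instead.} It orients each component as one directed \emph{cycle} rather than a rooted tree, using permutations.
\begin{itemize}
\item First it makes the SES a forest SES without increasing its size. If $H_{\Eq}$ has a cycle, shift the cycle left until one of its edges is induced at offset $0$, then drop the first position of that equation. This deletes a cycle edge while keeping every equation a single interval.
\item Each equation $e=(\{i,j\},\ell)$ is encoded by $\pi_e=\tau_{i,j}\circ\cdots\circ\tau_{i+\ell-1,j+\ell-1}$. This is the map $p\mapsto p+\delta_e(p)$, where $\delta_e$ is constant on at most five intervals: a shift by $+d$ on $[i,i+\ell)$, and wrap-around jumps back at the ends of the residue chains. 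Hence $\pi_e$ has at most four breakpoints $t$ with $\pi_e(t)+1\neq\pi_e(t+1)$.
\item Breakpoints are subadditive under composition, so $\pi_{\Eq}=\pi_{e_1}\circ\cdots\circ\pi_{e_m}$ has at most $4|\Eq|$ of them.
\item Because $H_{\Eq}$ is a forest, each successive transposition merges two distinct orbits. So the orbits of $\pi_{\Eq}$ are exactly the components of $H_{\Eq}$.
\item Take $\pi_{\Eq}$ itself as the pointer function, and store one representative per orbit explicitly, with position $1$ among them. Validity is then free, since iterating $\pi_{\Eq}$ around an orbit reaches its representative. The phrase count is at most $4|\Eq|+2\kappa(\Eq)=4|\Eq|+2|\Ch|\le 4s(w)$.
\end{itemize}
Without this mechanism, or an equivalent one that controls how pointer directions change from one position to the next, your outline does not establish the bound.
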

Since $s(w) \leq b(w)$, this theorem gives $s(w) \leq b(w) \leq 4s(w)$.
Hence, $b(w) = \Theta(s(w))$.

An SES $(n, \Eq, \Ch)$ representing a string is a \emph{forest SES} if $H_{\Eq}$ is a forest.
We first prove the following lemma.
\begin{lemma} \label{lem:forest_ses}
For every string $w$, there exists a minimum-size SES representing $w$ that is a forest SES.
\end{lemma}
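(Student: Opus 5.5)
We would argue by an extremal choice. Among all minimum-size SESs representing $w$, fix one, $(n,\Eq,\Ch)$, that minimizes the total length $\Lambda(\Eq)=\sum_{(\{i,j\},\ell)\in\Eq}\ell$. We regard $H_{\Eq}$ as a multigraph whose edges are labelled by pairs $(e,h)$. Here $e=(\{i,j\},\ell)\in\Eq$ and $0\le h<\ell$, and the edge joins $i+h$ and $j+h$; since $i<j$, no edge is a self-loop. The claim is that this $H_{\Eq}$ has no cycle. A repeated edge produced by two different labels counts as a cycle of length $2$, so acyclicity of the multigraph also makes the underlying simple graph a forest. Suppose, for contradiction, that $H_{\Eq}$ contains a cycle $Z$ with distinct edge labels $(e_1,h_1),\dots,(e_m,h_m)$ and distinct vertices.

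The key step is a shift argument that exploits the translation invariance of substring equations. Shifting any edge labelled $(e,h)$ by $+1$ gives the edge $\{i+h+1,j+h+1\}$, which is again an edge of $H_{\Eq}$ with label $(e,h+1)$ whenever $h+1<\ell$. Let $t=\min_k(\ell_{e_k}-1-h_k)\ge 0$, and shift every vertex of $Z$ by $t$ and every label from $(e_k,h_k)$ to $(e_k,h_k+t)$. All shifted labels are valid, and all shifted vertices lie in $[n]$ because they are endpoints of existing edges. The map on vertices ($p\mapsto p+t$) is injective, and so is the map on labels. Hence the image $Z'$ is again a cycle of $H_{\Eq}$, and at least one of its edges has a label $(e,\ell_e-1)$, the last offset of its equation.

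Now delete that edge from $H_{\Eq}$, which amounts to replacing $e$ by its shortening of length $\ell_e-1$ (or removing $e$ entirely if $\ell_e=1$). The shortened equation is still satisfied by $w$. Since the deleted edge lies on a cycle, the connected components of the referencing graph are unchanged, so the same $\Ch$ still fixes every component, and by the discussion preceding \Cref{obs:maximal_ses} the new system still represents $w$. Its size is at most the old size, and strictly smaller if $\ell_e=1$, which would contradict minimality. Otherwise the size is equal while $\Lambda$ drops by one, contradicting the choice of the SES. Therefore $H_{\Eq}$ is a forest.

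The main obstacle is ensuring that we remove an \emph{end} edge of some equation rather than an interior one. Deleting an interior edge would split an equation into two and increase the size, so the shift argument is the crux. The points to verify carefully are that the shifted edges remain in $H_{\Eq}$ and that the shifted cycle is still a genuine cycle in the multigraph sense; both follow from the injectivity of the translation.
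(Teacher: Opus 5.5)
Your proof is correct and is essentially the paper's argument in mirror image. The paper shifts the cycle leftward until some edge sits at offset $0$ and trims the first position of that equation, while you shift rightward to offset $\ell_e-1$ and trim the last position, phrasing termination as an extremal choice (minimum total length among minimum-size systems) rather than as an iterative procedure. Your explicit multigraph reading of $H_{\Eq}$, with parallel edges from distinct labels counted as $2$-cycles, is also the form of acyclicity that \Cref{lem:pi_eq_components} later relies on.
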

\begin{proof}
We show that every SES $(n, \Eq, \Ch)$ can be converted into a forest SES without increasing its size.
We write $\Eq = \{e_k\}_{k=1}^{m}$, where $m = |\Eq|$ and $e_k = (\{i_k, j_k\}, \ell_k)$ for $1 \leq k \leq m$.

Assume that the given SES is not a forest SES.
Then, $H_{\Eq}$ contains a cycle $P = (p_0, p_1, \ldots, p_{|P| - 1}, p_{|P|} = p_0)$.
For each $0 \leq t < |P|$, there exist integers $k_t$ and $o_t$ with $0 \leq o_t < \ell_{k_t}$ such that the constraint $e_{k_t}$ induces the edge $\{p_t, p_{t+1}\}$ at offset $o_t$.
That is, $\{i_{k_t} + o_t, j_{k_t} + o_t\} = \{p_t, p_{t+1}\}$.

Let $o_{\min} = \min_{0 \leq t < |P|} o_t$.
If $o_{\min} > 0$, each selected edge is induced by its corresponding constraint at a positive offset.
Thus, the sequence of vertices $(p'_0, \ldots, p'_{|P|})$ defined by $p'_i = p_i - 1$ forms another cycle in $H_{\Eq}$.
The edges on this cycle are induced by the same constraints as before, with offsets $o_t - 1$ for $0 \leq t < |P|$.
By repeating this shift, we may assume that $o_{\min} = 0$.

In this case, there exists an integer $t$ with $0 \leq t < |P|$ such that $o_t = 0$.
If $\ell_{k_t} > 1$, we replace $e_{k_t} = (\{i_{k_t}, j_{k_t}\}, \ell_{k_t})$ with $(\{i_{k_t} + 1, j_{k_t} + 1\}, \ell_{k_t} - 1)$.
If $\ell_{k_t} = 1$, we simply remove $e_{k_t}$.
This modification removes only the edge $\{i_{k_t}, j_{k_t}\}$ from $H_{\Eq}$.
All other edges remain unchanged.
The edge $\{i_{k_t}, j_{k_t}\} = \{p_t, p_{t+1}\}$ lies on the cycle.
Therefore, the modification leaves the connected components of $H_{\Eq}$ unchanged, and the resulting SES represents the same string.
The modification also does not increase the size of the SES.

Each modification decreases the sum of the lengths of the constraints in $\Eq$.
Thus, by repeating this procedure while $H_{\Eq}$ contains a cycle, we can obtain a forest SES representing the same string without increasing its size.
\end{proof}

For the remainder of this section, we fix a forest SES $(n, \Eq, \Ch)$ of size $s(w)$ that represents $w$.
By \Cref{lem:forest_ses}, such a forest SES always exists.
In what follows, we order the substring-equality constraints arbitrarily and write $\Eq = \{ e_1, \dots, e_m\}$, where $m = |\Eq|$ and $e_k = (\{i_k, j_k\}, \ell_k)$ for $1 \leq k \leq m$.

We next introduce the notation for permutations and transpositions used in this section.
A permutation of $[n]$ is a bijection $\pi: [n] \to [n]$.
For any $1 \leq i, j \leq n$, let $\tau_{i,j}: [n] \to [n]$ 
be the transposition defined by $\tau_{i,j}(i) = j$, $\tau_{i,j}(j) = i$, and $\tau_{i, j}(k) = k$ for every $k \in [n] \setminus \{ i, j\}$.
For any two bijections $\pi$ and $\rho$ on $[n]$, $\pi \circ \rho$ denotes their composition, which satisfies $(\pi \circ \rho)(k) = \pi(\rho(k))$ for every $k \in [n]$.

For a position $i \in [n]$, consider the sequence
$i_0, i_1, \ldots$ defined by
$i_0 = i$ and $i_{t+1} = \pi(i_t)$.
Because $\pi$ is a bijection on the finite set $[n]$,
this sequence eventually returns to $i$.
Let $r$ be the smallest positive integer such that $i_r = i$.
The \emph{orbit} of $i$ under $\pi$ is $O_\pi(i) = \{i_0, \ldots, i_{r-1}\}$.
We denote the set of all orbits of $\pi$ by
$\mathcal{O}(\pi) = \{O_\pi(i) \mid i \in [n]\}$.
Let $\pi$ be a permutation and let $i, j \in [n]$ be two positions that belong to different orbits of $\pi$.
Composing $\pi$ with $\tau_{i,j}$ merges the two orbits $O_\pi(i)$ and $O_\pi(j)$ into the single
orbit $O_{\pi \circ \tau_{i,j}}(i) = O_\pi(i) \cup O_\pi(j)$.
Every other orbit of $\pi$ remains unchanged.

For each substring-equality constraint $e = (\{i, j\}, \ell) \in \Eq$,
let
$\pi_e
=
\tau_{i, j}
\circ \tau_{i+1, j+1}
\circ \cdots \circ
\tau_{i+\ell-1, j+\ell-1}$.
Each transposition $\tau_{i+k, j+k}$ $(0 \leq k < \ell)$
appearing in the composition of $\pi_e$
corresponds to one edge 
induced by $e$ in the referencing graph $H_{\Eq}$.
Thus, the orbits of $\pi_e$ represent the equality relations imposed by $e$.
More precisely, two positions that belong to the same orbit must have the same character.

For each $e = (\{i, j\}, \ell) \in \Eq$, we define the function $\delta_e: [n] \to \mathbb{Z}$,
which characterizes $\pi_e$,
as follows:
\begin{align*}
    \delta_e(p) &=
    \begin{cases}
        0 & p \in [0, i), \\
        d & p \in [i, i + \ell), \\
        -qd & p \in [i + \ell, i + (q + 1)d), \\
        -(q+1)d, & p \in [i + (q +1)d, j + \ell), \\
        0, & p \in [j + \ell, n],
    \end{cases}
\end{align*}
where $d = j - i$ and $q = \lfloor \ell / d \rfloor$.
\Cref{fig:equality_representing_bijection} illustrates examples of $\pi_e$ and $\delta_e$.
\begin{lemma}\label{lem:pi_e_delta}
For each $e = (\{i, j\}, \ell) \in \Eq$ and $p \in [n]$, $\pi_e(p) = p + \delta_e(p)$.
\end{lemma}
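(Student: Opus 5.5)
The plan is to verify the formula directly by tracking a single position $p$ through the transpositions in the order in which they act. Since $\pi_e = \tau_{i,j}\circ\tau_{i+1,j+1}\circ\cdots\circ\tau_{i+\ell-1,j+\ell-1}$, the transposition $\tau_{i+\ell-1,j+\ell-1}$ is applied first and $\tau_{i,j}$ last. So the offsets $k$ are processed in \emph{decreasing} order $\ell-1,\ell-2,\ldots,0$. The key bookkeeping fact is simple. At step $k$, a current position $x$ moves only if $x=i+k$ (it goes to $x+d$) or $x=j+k$ (it goes to $x-d$), where $d=j-i>0$. Once a position has been moved at step $k$, a later step $k''<k$ can touch it only through the matching offset $k''$. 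I split into cases according to the five intervals in the definition of $\delta_e$.

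\emph{Outside $[i,j+\ell)$.} If $p<i$ or $p\ge j+\ell$, then $p$ is neither $i+k$ nor $j+k$ for any $0\le k<\ell$. Hence $\pi_e(p)=p$, matching $\delta_e(p)=0$.

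\emph{The case $p\in[i,i+\ell)$.} Here $p=i+k_0$ with $k_0=p-i$. If also $p\ge j$, then $p=j+(p-j)$ with $p-j<k_0$. Offsets are processed in decreasing order, so step $k_0$ acts first and sends $p$ to $p+d=j+k_0$. Afterwards $p+d$ could only be touched as $i+k''$ with $k''=k_0+d>k_0$, which is already past, or as $j+k_0$, which is already used. So $\pi_e(p)=p+d$, as claimed.

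\emph{The case $p\in[i+\ell,j+\ell)$.} First suppose $p<j$, which forces $\ell<d$ and hence $q=0$. Then $p$ is untouched, and $\delta_e(p)=-0\cdot d=0$ on $[i+\ell,i+d)=[i+\ell,j)$.

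Now suppose $p\ge j$, and write $p=j+k_0$ with $0\le k_0<\ell$. Then $p$ is not of the form $i+k$ for any $k<\ell$, because $p-i\ge\ell$. Step $k_0$ therefore sends $p$ to $i+k_0=j+(k_0-d)$. If $k_0-d\ge 0$, the later step $k_0-d$ sends it on to $p-2d$. The chain continues, stepping down by $d$, until the offset becomes negative, and none of the intermediate positions is caught as an $i+k''$ with an unused offset.

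This gives $\pi_e(p)=p-td$ with $t=\lfloor (p-j)/d\rfloor+1=\lfloor (p-i)/d\rfloor$. Finally I compare with the definition. For $p\in[i+\ell,i+(q+1)d)$ we have $p-i\in[\ell,(q+1)d)$, so $t=q$ because $q=\lfloor \ell/d\rfloor$. For $p\in[i+(q+1)d,j+\ell)$ we have $p-i\in[(q+1)d,\ell+d)$, so $t=q+1$. This matches the two middle branches of $\delta_e$.

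The main obstacle is the overlapping situation $d\le\ell$. There a position can lie in both ranges $[i,i+\ell)$ and $[j,j+\ell)$, and the claim that the ``chain'' never gets diverted relies entirely on the decreasing processing order of the offsets. I will state that invariant explicitly as a small claim: after step $k$, the current image of $p$ is never of the form $i+k''$ or $j+k''$ for an unused offset $k''<k$, except along the chain described above. I will prove it once and use it in all cases, checking the interval endpoints carefully, including the possibly empty interval $[i+(q+1)d,j+\ell)$.
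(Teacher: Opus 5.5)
Your proof is correct and is essentially the paper's argument. Both apply the transpositions in decreasing order of offset, and both observe that a position $i+k_0\in[i,i+\ell)$ is moved once to $j+k_0$ and never touched again, while a position $j+k_0\in[i+\ell,j+\ell)$ is carried down a chain in steps of $d$ until the offset becomes negative. The only difference is bookkeeping. The paper groups positions into the residue classes $C_h$ modulo $d$ and shows that $\pi_e$ cycles each $C_h$, so the last element moves by $-a_h d$; you instead work interval by interval and count the chain length directly as $\lfloor (p-i)/d\rfloor$.
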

\begin{proof}
For $0 \leq k < \ell$, let $\tau'_k = \tau_{i + k, j + k}$.
By the definition of function composition, $\pi_e(p)$ is obtained by applying $\tau'_{\ell-1}, \ldots, \tau'_0$ to $p$ in this order.
For every $0 \leq k < \ell$, the transposition $\tau'_k$ exchanges positions $i + k$ and $j + k$ and fixes every other position.

For each $0 \leq h < d$, let $a_h$ be the number of offsets $k \in [0, \ell)$ satisfying $k \equiv h \pmod d$.
For each $0 \leq t \leq a_h$, let $p_{h, t} = i + h + td$ and let $C_h = \{p_{h, t} \mid 0 \leq t \leq a_h\}$.
For every $0 \leq t < a_h$, the transposition $\tau'_{h+td}$ exchanges $p_{h,t}$ and $p_{h,t+1}$.
These are the only transpositions in $\pi_e$ that change positions in $C_h$.
Thus, for $a_h > 0$ and $p \in C_h$, the value $\pi_e(p)$ is obtained by applying $\tau'_{h+(a_h-1)d}, \ldots, \tau'_h$ to $p$ in this order.

We next prove that $\pi_e(p_{h, t}) = p_{h, (t + 1) \bmod (a_h + 1)}$ for every $0 \leq h < d$ and $0 \leq t \leq a_h$.
Assume first that $a_h > 0$ and fix an integer $t$ with $0 \leq t < a_h$.
The transpositions $\tau'_{h+sd}$ with $t < s < a_h$ do not change $p_{h, t}$, whereas $\tau'_{h+td}$ maps $p_{h, t}$ to $p_{h, t+1}$.
Since $\tau'_{h+sd}$ does not change $p_{h, t+1}$ for any $0 \leq s < t$, $\pi_e(p_{h, t}) = p_{h, t+1}$.
For the remaining case $t = a_h$, applying $\tau'_{h+sd}$ for $s = a_h - 1, \ldots, 0$ maps $p_{h, s+1}$ to $p_{h, s}$ at each step.
Hence, $\pi_e(p_{h, a_h}) = p_{h, 0}$.
If $a_h = 0$, then $C_h = \{p_{h, 0}\}$ and no transposition in $\pi_e$ changes $p_{h, 0}$.
Thus, $\pi_e(p_{h, 0}) = p_{h, 0}$.

We now use the equality $\pi_e(p_{h, t}) = p_{h, (t + 1) \bmod (a_h + 1)}$ to prove that $\pi_e(p) = p + \delta_e(p)$.
Since both positions exchanged by each transposition lie in $[i, j + \ell)$, $\pi_e(p) = p = p + \delta_e(p)$ for every $p \in [n] \setminus [i, j + \ell)$.
The positions $p_{h, t}$ with $0 \leq t < a_h$ form the interval $[i, i + \ell)$.
For every $0 \leq t < a_h$, the equality $p_{h, t+1} = p_{h, t} + d$ gives $\pi_e(p_{h, t}) = p_{h, t} + d = p_{h, t} + \delta_e(p_{h, t})$.
Consider $p_{h, a_h}$ for each $0 \leq h < d$ and write $\ell = qd + r$ with $0 \leq r < d$.
The values of $a_h$ are $q + 1$ for $0 \leq h < r$ and $q$ for $r \leq h < d$.
The corresponding positions $p_{h, a_h}$ form the intervals $[i + (q + 1)d, j + \ell)$ and $[i + \ell, i + (q + 1)d)$, respectively.
On both intervals, $\pi_e(p_{h, a_h}) = p_{h, a_h} - a_h d = p_{h, a_h} + \delta_e(p_{h, a_h})$.
Therefore, $\pi_e(p) = p + \delta_e(p)$ for every $p \in [n]$.
\end{proof}

\begin{figure}[t]
  \centering
  \includegraphics[width=0.89\textwidth]{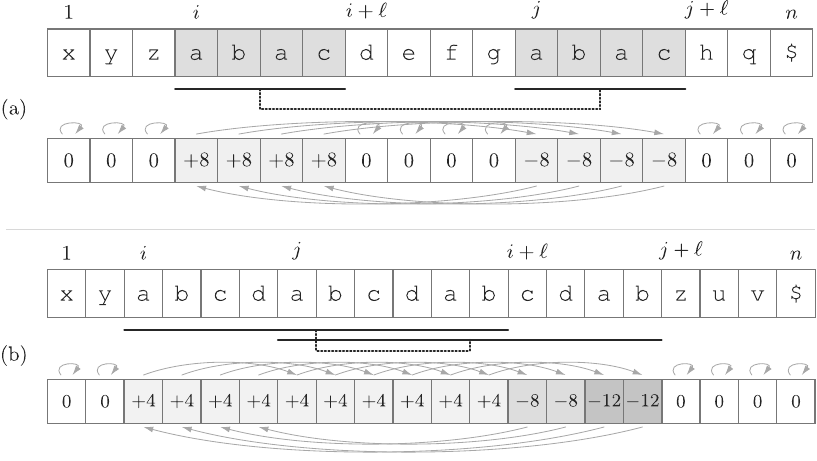}
  \caption{
    Illustrations of $\delta_e$ for substring-equality constraints.
    In each panel, the upper row shows a string and a substring-equality constraint,
    while each box in the lower row contains the value of $\delta_e$ at the corresponding text position.
    The arrows in each lower row show $\pi_e$, with the arrow from each position $p$ pointing to position $\pi_e(p)$.
    Panel~(a) illustrates non-overlapping occurrences with $d = 8$, $\ell = 4$, and $q = 0$.
    Panel~(b) illustrates overlapping occurrences with $d = 4$, $\ell = 10$, and $q = 2$.
  }
  \label{fig:equality_representing_bijection}
\end{figure}

Let $\pi_{\Eq} = \pi_{e_1} \circ \pi_{e_2} \circ \dots \circ \pi_{e_{m-1}} \circ \pi_{e_m}$.
The following lemma establishes the correspondence between the orbits of $\pi_{\Eq}$ and the connected components of $H_{\Eq}$.
\begin{lemma}\label{lem:pi_eq_components}
The orbits of $\pi_{\Eq}$ form the same partition of $[n]$ as the connected components of the referencing graph $H_{\Eq}$.
That is, $\mathcal{O}(\pi_{\Eq}) = \mathcal{C}(H_\Eq)$.
\end{lemma}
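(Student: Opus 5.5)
We will use the standing assumption that $H_{\Eq}$ is a forest (\Cref{lem:forest_ses}). The key observation is that $\pi_{\Eq}$ is a product of transpositions, one per edge of $H_{\Eq}$. Writing each $\pi_{e_k}$ as the composition of its transpositions gives $\pi_{\Eq} = \tau^{(1)} \circ \tau^{(2)} \circ \cdots \circ \tau^{(N)}$. Here $N = \sum_k \ell_k$, and each $\tau^{(t)} = \tau_{a_t, b_t}$ corresponds to an edge $\{a_t, b_t\}$ of $H_{\Eq}$. Since $H_{\Eq}$ is a forest, it is a simple graph, so distinct transpositions in this list give distinct edges. Two different constraints producing the same edge would create a cycle of length two in the multigraph sense. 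I will read ``forest'' as excluding such parallel edges; otherwise the edge set of $H_{\Eq}$ would not be in bijection with the transposition list. So the list $\tau^{(1)}, \dots, \tau^{(N)}$ is exactly the edge set of $H_{\Eq}$, each edge appearing once.

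The plan is to build the product from right to left by induction. Let $\rho_t = \tau^{(N-t+1)} \circ \cdots \circ \tau^{(N)}$, with $\rho_0$ the identity, and let $F_t$ be the graph on $[n]$ whose edges are those of $\tau^{(N-t+1)}, \dots, \tau^{(N)}$. I claim $\mathcal{O}(\rho_t) = \mathcal{C}(F_t)$ for all $t$. The base case $t = 0$ is trivial, since both partitions consist of singletons.

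For the step, $\rho_{t+1} = \tau_{a,b} \circ \rho_t$ with $\{a,b\}$ the new edge. Since $F_{t+1} \subseteq H_{\Eq}$ is a forest, $a$ and $b$ lie in different components of $F_t$, hence in different orbits of $\rho_t$ by induction. The paper's merging fact is stated for $\pi \circ \tau_{i,j}$, but here the transposition multiplies on the left. Two routes handle this. One is to note that $\tau \circ \pi = \pi \circ (\pi^{-1} \tau \pi)$ and $\pi^{-1}\tau_{a,b}\pi = \tau_{\pi^{-1}(a),\pi^{-1}(b)}$, where $\pi^{-1}(a)$ and $\pi^{-1}(b)$ lie in the same orbits as $a$ and $b$. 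The other, which I would adopt for uniformity, is to induct from left to right instead, with $\rho'_t = \tau^{(1)} \circ \cdots \circ \tau^{(t)}$, so that each step is exactly $\pi \circ \tau_{i,j}$. Either way, the two orbits merge into one and all other orbits are unchanged. This matches the component merge in $F_{t+1}$, so $\mathcal{O}(\rho_{t+1}) = \mathcal{C}(F_{t+1})$. At $t = N$ we get $F_N = H_{\Eq}$ and $\rho_N = \pi_{\Eq}$, which is the statement.

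The main obstacle is the forest hypothesis and what it implies about the edges. Without acyclicity the statement fails, because a transposition joining two points already in one orbit splits that orbit. So the proof must use exactly that every edge joins two distinct components at the moment it is added. This holds in any order of addition because $H_{\Eq}$ is acyclic and has no repeated edges. Distinct offsets within one constraint give distinct edges, since the pair $\{i+k, j+k\}$ determines $k$ through $j - i$; that check is immediate. Repeated edges across different constraints are the only real subtlety, and the forest assumption, read as above, excludes them.
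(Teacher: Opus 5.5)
Your proof is correct and follows the paper's argument: you decompose $\pi_{\Eq}$ into one transposition per edge of $H_{\Eq}$, build the product left to right as $\pi^{(r)} = \pi^{(r-1)} \circ \tau_{a_r,b_r}$, and use acyclicity to guarantee that each new transposition joins two distinct orbits, so orbits and components merge in lockstep. Your explicit remark that repeated edges coming from different constraints must count as excluded 2-cycles states precisely what the paper uses silently when it says $H_{\Eq}$ has exactly $L$ edges. That multigraph reading is consistent with \Cref{lem:forest_ses}, whose cycle-removal procedure also eliminates such 2-cycles.
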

\begin{proof}
Let $L = \sum_{k=1}^{m} \ell_k$.
The referencing graph $H_{\Eq}$ has exactly $L$ edges.
Let $\tau_{a_1,b_1}, \ldots, \tau_{a_L,b_L}$ be all transpositions appearing in the decompositions of $\pi_{e_1}, \ldots, \pi_{e_m}$.
Then, $\pi_{\Eq}$ can be written as $\pi_{\Eq} = \pi_{e_1} \circ \dots \circ \pi_{e_m} = \tau_{a_1,b_1} \circ \cdots \circ \tau_{a_L,b_L}$.
Let $\pi^{(0)}$ be the identity permutation.
For each $1 \leq r \leq L$, define $\pi^{(r)} = \tau_{a_1,b_1} \circ \cdots \circ \tau_{a_r,b_r}$.
Let $G_r$ be the graph on $[n]$ whose edge set is $\{\{a_1,b_1\}, \ldots, \{a_r,b_r\}\}$.

We prove by induction on $r$ that the orbits of $\pi^{(r)}$ and the connected components of $G_r$ define the same partition of $[n]$.
That is, we prove $\mathcal{O}(\pi^{(r)}) = \mathcal{C}(G_r)$ for every $0 \leq r \leq L$.
The claim holds for $r = 0$ because both partitions consist of singleton sets.
Suppose that the claim holds for $r - 1$.
The positions $a_r$ and $b_r$ belong to different connected components of $G_{r-1}$, since otherwise, a path between them in $G_{r-1}$ together with the edge $\{a_r,b_r\}$ would form a cycle in $H_{\Eq}$,
contradicting the assumption that the SES is a forest SES.
By the induction hypothesis, $a_r$ and $b_r$ belong to different orbits of $\pi^{(r-1)}$.
Therefore, composing $\pi^{(r-1)}$ with $\tau_{a_r,b_r}$ merges exactly these two orbits and leaves all other orbits unchanged.
Adding the edge $\{a_r,b_r\}$ to $G_{r-1}$ also merges the two connected components corresponding to these orbits and leaves all other connected components unchanged.
Thus, the orbits of $\pi^{(r)}$ are the connected components of $G_r$.

Since $\pi^{(L)} = \pi_{\Eq}$ and $G_L = H_{\Eq}$, the induction proves $\mathcal{O}(\pi_{\Eq}) = \mathcal{C}(H_{\Eq})$.
\end{proof}

For a bijection $\pi: [n] \to [n]$,
let $B(\pi) = \{t \in [n-1] \mid \pi(t) + 1 \neq \pi(t + 1)\}$
and $\mu(\pi) = |B(\pi)|$.
By \Cref{lem:pi_e_delta}, $\mu(\pi_e) \leq 4$ for each $e \in \Eq$.
The following lemma gives an upper bound on $\mu$ for the composition of two permutations.
\begin{lemma}\label{lem:mu_composition}
For any two bijections $\pi$ and $\rho$ on $[n]$, $\mu(\pi \circ \rho) \leq \mu(\pi) + \mu(\rho)$.
\end{lemma}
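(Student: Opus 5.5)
The plan is to bound $B(\pi \circ \rho)$ by a union of two sets, one of size at most $\mu(\rho)$ and one of size at most $\mu(\pi)$. Fix $t \in [n-1]$ with $t \notin B(\rho)$, so that $\rho(t+1) = \rho(t) + 1$. Write $u = \rho(t)$. Then $(\pi\circ\rho)(t) = \pi(u)$ and $(\pi\circ\rho)(t+1) = \pi(u+1)$, where $u+1 \le n$ and hence $u \in [n-1]$. So if additionally $u \notin B(\pi)$, we get $(\pi\circ\rho)(t+1) = (\pi\circ\rho)(t) + 1$, and $t \notin B(\pi\circ\rho)$. Contrapositively,
\[
B(\pi\circ\rho) \subseteq B(\rho) \cup \{\, t \in [n-1] \setminus B(\rho) \mid \rho(t) \in B(\pi) \,\}.
\]

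For the second set, I will note that $\rho$ is injective, so $t \mapsto \rho(t)$ maps it injectively into $B(\pi)$. Hence its size is at most $\mu(\pi)$. Taking cardinalities of the union gives $\mu(\pi\circ\rho) \le \mu(\rho) + \mu(\pi)$.

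There is no real obstacle here. The only points needing care are checking that $u = \rho(t)$ lies in $[n-1]$, which follows from $\rho(t+1) = u+1 \le n$, so that membership in $B(\pi)$ is meaningful. Injectivity of $\rho$ is what makes the count in the second set at most $\mu(\pi)$.
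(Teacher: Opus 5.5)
Your proposal is correct and matches the paper's proof: it uses the same containment $B(\pi\circ\rho)\subseteq B(\rho)\cup\{t\in[n-1] \mid \rho(t)\in B(\pi)\}$ and bounds the second set by injectivity of $\rho$. Your explicit check that $\rho(t)\in[n-1]$ is a small detail the paper leaves implicit.
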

\begin{proof}
Consider any $t \notin B(\rho)$ such that $\rho(t) \notin B(\pi)$.
Then, $\rho(t + 1) = \rho(t) + 1$ and $\pi(\rho(t) + 1) = \pi(\rho(t)) + 1$.
Hence, $t \notin B(\pi \circ \rho)$.
It follows that $B(\pi \circ \rho) \subseteq B(\rho) \cup \{t \in [n-1] \mid \rho(t) \in B(\pi)\}$.
Since $\rho$ is injective, the second set has at most $|B(\pi)|$ elements.
Therefore, $\mu(\pi \circ \rho) \leq \mu(\pi) + \mu(\rho)$.
\end{proof}
Since $\pi_{\Eq} = \pi_{e_1} \circ \cdots \circ \pi_{e_m}$, repeated application of \Cref{lem:mu_composition} gives
$\mu(\pi_{\Eq}) \leq \sum_{k=1}^{m} \mu(\pi_{e_k}) \leq 4m$.

We define $\delta_{\Eq}: [n] \to \mathbb{Z}$ by $\delta_{\Eq}(p) = \pi_{\Eq}(p) - p$ for $1 \leq p \leq n$.
For each connected component $C \in \mathcal{C}(H_\Eq)$, fix a representative $p_C \in C$.
For the component $C$ containing position $1$, choose $p_C = 1$.
Let $S = \{p_C \mid C \in \mathcal{C}(H_\Eq)\}$.
Let $\mathcal{I}$ be the set of maximal nonempty intervals contained in $[n] \setminus S$ on which $\delta_{\Eq}$ takes the same value.
By definition, $\mathcal{I} \cup \{ \{ p \} \mid p \in S \}$ forms a partition of $[n]$.

We define the BMS $\mathcal{B}$ as follows:
\begin{itemize}
  \item For each $p \in S$, let $w[p..p]$ be a phrase representing the single character $w[p]$.
  \item For each $[l, r] \in \mathcal{I}$ in which $\delta_{\Eq}(p) = d$ for every $p \in [l, r]$,
  let $w[l..r]$ be a phrase referencing $w[l + d..r + d]$.
\end{itemize}
\Cref{fig:tribonacci_ses_to_bms} illustrates this construction of $\mathcal{B}$ for the $6$th Tribonacci string.

\begin{theorem}
The BMS $\mathcal{B}$ is valid for $w$
and its size is at most $4 |\Eq| + 2 |\Ch| \leq 4s(w)$.
\end{theorem}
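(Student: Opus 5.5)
The proof has three parts: the phrases of $\mathcal{B}$ reproduce $w$, the reference relation is acyclic, and the phrase count is bounded.

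\emph{Correctness of the copies.} Each copied phrase $w[l..r]$ with constant shift $d$ points to $w[l+d..r+d]$, and $p+d=\pi_{\Eq}(p)$ for every $p\in[l,r]$. The source positions therefore form the interval $\pi_{\Eq}([l,r])=[l+d,r+d]$, so the pointer is well formed. By \Cref{lem:pi_eq_components}, $p$ and $\pi_{\Eq}(p)$ lie in the same component of $H_{\Eq}$. Since $w$ satisfies $\Eq$, this gives $w[p]=w[\pi_{\Eq}(p)]$, so the copied phrase indeed equals its source.

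\emph{Validity.} The induced transition function is $\varphi(p)=\pi_{\Eq}(p)$ for $p\notin S$ and $\varphi(p)=\bot$ for $p\in S$. Take any $p$ and iterate $\pi_{\Eq}$. The iterates stay inside the orbit $O_{\pi_{\Eq}}(p)$, which by \Cref{lem:pi_eq_components} is exactly the component $C$ containing $p$. Because $\pi_{\Eq}$ cycles through the whole orbit, the representative $p_C\in C$ is reached within at most $|C|-1$ steps. At that point $\varphi$ yields $\bot$, so every position reaches $\bot$ and no directed cycle exists.

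\emph{Size.} The number of phrases is $|S|$ plus $|\mathcal{I}|$. First, $|S|=\kappa(\Eq)$, and $\kappa(\Eq)\le|\Ch|$ because every component needs a character assignment. Next, a maximal interval of $[n]\setminus S$ with constant $\delta_{\Eq}$ can only end at a position $t$ satisfying $t\in S$, or $t+1\in S$, or $\delta_{\Eq}(t)\ne\delta_{\Eq}(t+1)$ (or $t=n$). The condition $\delta_{\Eq}(t)\ne\delta_{\Eq}(t+1)$ is exactly $t\in B(\pi_{\Eq})$.

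This is the delicate counting step. Charging each interval to its right end gives $|\mathcal{I}|\le\mu(\pi_{\Eq})+|S|+1$, which is slightly too weak. To remove the extra terms I would use three observations.
\begin{itemize}
\item An interval ending just before some $p_C$ is charged to $p_C$. Each $p_C$ has only one left neighbour, so this costs at most $|S|$.
\item The choice $p_C=1$ for the component containing position $1$ means $S$ includes position $1$, so no interval starts at position $1$.
\item The last interval, the one ending at $n$, is handled separately.
\end{itemize}
Combining these gives $|\mathcal{I}|\le\mu(\pi_{\Eq})+|S|$. I expect one spare unit to absorb the interval ending at $n$: the bound $\mu(\pi_e)\le4$ is not tight at the boundaries of $[n]$, or alternatively the representative $p_C=1$ accounts for it.

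Using $\mu(\pi_{\Eq})\le4|\Eq|$ from \Cref{lem:mu_composition} and $|S|\le|\Ch|$, the total is at most $4|\Eq|+2|\Ch|\le4s(w)$, since $(n,\Eq,\Ch)$ is a minimum forest SES of size $s(w)=|\Eq|+|\Ch|$. The main obstacle is this bookkeeping of interval endpoints near $S$ and at the ends of $[n]$.
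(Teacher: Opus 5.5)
Your proposal follows the paper's proof. Validity comes from $\varphi_{\mathcal{B}}=\pi_{\Eq}$ off $S$ together with orbits $=$ components (\Cref{lem:pi_eq_components}). The size bound comes from $\mu(\pi_{\Eq})\le 4|\Eq|$ and $|S|=\kappa(\Eq)\le|\Ch|$; your inequality here is fine and does not even need minimality. The one step not actually closed is the off-by-one you flagged. Your three bullets mix right-end and left-end charging, and after ``combining these'' you still leave the interval ending at $n$ unpaid.

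The first source of slack you suggest does not exist. The bound $\mu(\pi_e)=4$ is attained whenever $i>1$, $j+\ell-1<n$ and $\ell<d$. For example, $(\{2,5\},2)$ with $n=7$ gives $\delta_e=(0,3,3,0,-3,-3,0)$, which has four breakpoints. So $\mu(\pi_{\Eq})\le 4|\Eq|$ has no guaranteed slack to absorb anything.

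The correct fix is your other alternative, stated precisely. With right-end charging, the representative $1\in S$ has no left neighbour, so at most $|S|-1$ intervals end just before an element of $S$. At most $\mu(\pi_{\Eq})$ intervals end at a breakpoint, and one ends at $n$, so $|\mathcal{I}|\le\mu(\pi_{\Eq})+|S|$.

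Cleaner still, charge each $[l,r]\in\mathcal{I}$ to $l-1$; this is what your second bullet, ``no interval starts at position $1$'', really supports. Since $1\in S$, we have $l\ge 2$. Maximality together with $l-1\notin S$ forces $\delta_{\Eq}(l-1)\neq\delta_{\Eq}(l)$, i.e.\ $l-1\in B(\pi_{\Eq})$. Hence $[l,r]\mapsto l-1$ is an injection into $S\cup B(\pi_{\Eq})$, and no special case at $n$ arises.

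The paper phrases the same count differently. It starts from the at most $4|\Eq|+1$ maximal constant-$\delta_{\Eq}$ intervals of $[n]$ and deletes the points of $S$. Each deletion creates at most one new interval, and deleting position $1$ creates none. With either version in place, the rest of your argument goes through and gives $|\mathcal{I}|+|S|\le 4|\Eq|+2|\Ch|\le 4s(w)$.
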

\begin{proof}
Let $\varphi_{\mathcal{B}}: [n] \to [n] \cup \{\bot\}$ be the transition function induced by $\mathcal{B}$.
For every $p \in S$, $\varphi_{\mathcal{B}}(p) = \bot$.
For every $p \in [n] \setminus S$,
the definition of the phrase containing $p$ gives $\varphi_{\mathcal{B}}(p) = p + \delta_{\Eq}(p) = \pi_{\Eq}(p)$.

To prove that $\mathcal{B}$ is a valid BMS for $w$, it suffices to show the following two properties.
First, $w[p] = w[\varphi_{\mathcal{B}}(p)]$ for every $p \in [n]$ such that $\varphi_{\mathcal{B}}(p) \neq \bot$.
Second, for every $p \in [n]$, there exists an integer $k \geq 0$ such that $\varphi_{\mathcal{B}}^k(p) = \bot$.
The two properties are immediate when $\varphi_{\mathcal{B}}(p) = \bot$.

Consider a position $p \in [n]$ such that $\varphi_{\mathcal{B}}(p) \neq \bot$.
Let $C$ be the connected component that contains $p$.
By \Cref{lem:pi_eq_components}, $C = O_{\pi_{\Eq}}(p) = \{\pi_{\Eq}^t(p) \mid t \geq 0\}$.
Because $p$ and $\pi_{\Eq}(p)$ belong to $C$, $w[p] = w[\pi_{\Eq}(p)] = w[\varphi_{\mathcal{B}}(p)]$.
This proves the first property.

The connected component $C$ contains the unique representative $p_C \in (C \cap S)$.
The position $p_C$ belongs to the orbit of $p$, so 
there exists an integer $k' \geq 0$ satisfying $\pi_{\Eq}^{k'}(p) = p_C$.
We take $k'$ to be the minimum such integer.
For every $0 \leq t < k'$, the position $\pi_{\Eq}^t(p)$ does not belong to $S$.
Hence, $\varphi_{\mathcal{B}}^t(p) = \pi_{\Eq}^t(p)$ for every $0 \leq t \leq k'$.
Since $\varphi_{\mathcal{B}}(p_C) = \bot$, $\varphi_{\mathcal{B}}^{k' + 1}(p) = \bot$.
This proves the second property, and the two properties establish the validity of $\mathcal{B}$.

We next bound the number of phrases in $\mathcal{B}$.
By construction, $\mathcal{B}$ has exactly $|\mathcal{I}| + |S|$ phrases.
The inequality $\mu(\pi_{\Eq}) \leq 4m$ implies that  $\delta_{\Eq}(p) \neq \delta_{\Eq}(p + 1)$ holds for at most $4m$ positions $p \in [n-1]$.
Thus, $[n]$ can be partitioned into at most $4m + 1$ maximal intervals on which $\delta_{\Eq}$ takes the same value.
Because $1 \in S$, removing position $1$ does not increase the number of intervals.
Each of the other $|S| - 1$ positions in $S$ can increase this number by at most one.
Therefore, $|\mathcal{I}| \leq (4m + 1) + (|S| - 1) = 4m + |S|$.
Moreover, $|S| = |\mathcal{C}(H_\Eq)| = \kappa(\Eq)$.
Since the given SES has minimum size, it has exactly $\kappa(\Eq)$ character-assignment constraints.
That is, $|\Ch| = \kappa(\Eq)$.
Thus, $|\mathcal{I}| + |S| \leq 4m + 2|S| = 4m + 2\kappa(\Eq) = 4 |\Eq| + 2 |\Ch| \leq 4s(w)$.
\end{proof}

\begin{figure}[H]
  \centering
  \includegraphics[width=0.98\textwidth]{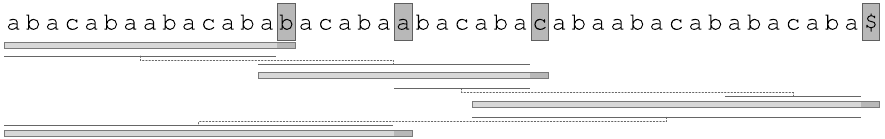}
  \par\vspace{0.6\baselineskip}
  \includegraphics[width=0.98\textwidth]{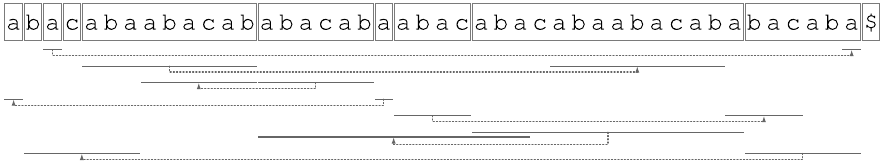}
  \caption{
    An example of an SES derived from the super-maximal right extensions of the Tribonacci string $T_6$ with the unique character $\$$ and a BMS constructed from this SES.
    The Tribonacci string is defined as $T_0 = \mathtt{a}$, $T_1 = \mathtt{ab}$, $T_2 = \mathtt{abac}$, and $T_k = T_{k-1} T_{k-2} T_{k-3}$ for $k \geq 3$.
    In the upper diagram, each two-tone block below $w$ marks the leftmost occurrence of one super-maximal right extension $xc$.
    The light-gray segment of the block marks $x$, and the dark-gray segment marks the final character $c$.
    Each pair of solid segments joined by a dotted line represents a substring equality in this SES.
    In the lower diagram, each box represents one phrase of the BMS constructed from this SES, with a dotted arrow from each copied phrase to its source phrase.
  }
  \label{fig:tribonacci_ses_to_bms}
\end{figure}

\section*{AI Usage Disclosure}
The authors used OpenAI's ChatGPT using GPT-5.5 and GPT-5.6-Sol for drafting technical text, correcting typographical errors, improving the organization of the manuscript, and exploring proof strategies.
In particular, the models contributed to \Cref{se:np_hardness,se:bms_ses_equivalence} by helping develop proof strategies and the conversion method between SES and BMS, as well as the strategy for the hardness reduction and the construction of the string used in the reduction.
The authors verified all claims and proofs and take full responsibility for the contents of this paper.
 \bibliography{refs}

@inproceedings{DBLP:conf/spire/NavarroRU25,
  author       = {Gonzalo Navarro and
                  Giuseppe Romana and
                  Cristian Urbina},
  editor       = {Golnaz Badkobeh and
                  Jakub Radoszewski and
                  Nicola Tonellotto and
                  Ricardo Baeza{-}Yates},
  title        = {Smallest Suffixient Sets as a Repetitiveness Measure},
  booktitle    = {String Processing and Information Retrieval - 32nd International Symposium,
                  {SPIRE} 2025, London, UK, September 8-11, 2025, Proceedings},
  series       = {Lecture Notes in Computer Science},
  volume       = {16073},
  pages        = {217--232},
  publisher    = {Springer},
  year         = {2025},
  url          = {https://doi.org/10.1007/978-3-032-05228-5\_18},
  doi          = {10.1007/978-3-032-05228-5\_18},
  bibsource    = {dblp computer science bibliography, https://dblp.org}
}

@article{DBLP:journals/tcs/GawrychowskiKRR20,
  author       = {Pawel Gawrychowski and
                  Tomasz Kociumaka and
                  Jakub Radoszewski and
                  Wojciech Rytter and
                  Tomasz Walen},
  title        = {Universal reconstruction of a string},
  journal      = {Theor. Comput. Sci.},
  volume       = {812},
  pages        = {174--186},
  year         = {2020},
  url          = {https://doi.org/10.1016/j.tcs.2019.10.027},
  doi          = {10.1016/J.TCS.2019.10.027},
  bibsource    = {dblp computer science bibliography, https://dblp.org}
}

@article{DBLP:journals/csur/Navarro21a,
  author       = {Gonzalo Navarro},
  title        = {Indexing Highly Repetitive String Collections, Part {I:} Repetitiveness
                  Measures},
  journal      = {{ACM} Comput. Surv.},
  volume       = {54},
  number       = {2},
  pages        = {29:1--29:31},
  year         = {2022},
  url          = {https://doi.org/10.1145/3434399},
  doi          = {10.1145/3434399},
  bibsource    = {dblp computer science bibliography, https://dblp.org}
}

@article{DBLP:journals/corr/abs-2312-01359,
  author       = {Lore Depuydt and
                  Travis Gagie and
                  Ben Langmead and
                  Giovanni Manzini and
                  Nicola Prezza},
  title        = {Suffixient Sets},
  journal      = {CoRR},
  volume       = {abs/2312.01359},
  year         = {2023},
  url          = {https://doi.org/10.48550/arXiv.2312.01359},
  doi          = {10.48550/ARXIV.2312.01359},
  eprinttype    = {arXiv},
  eprint       = {2312.01359},
  bibsource    = {dblp computer science bibliography, https://dblp.org}
}

@inproceedings{DBLP:conf/spire/CenzatoOP24,
  author       = {Davide Cenzato and
                  Francisco Olivares and
                  Nicola Prezza},
  editor       = {Zsuzsanna Lipt{\'{a}}k and
                  Edleno Silva de Moura and
                  Karina Figueroa and
                  Ricardo Baeza{-}Yates},
  title        = {On Computing the Smallest Suffixient Set},
  booktitle    = {String Processing and Information Retrieval - 31st International Symposium,
                  {SPIRE} 2024, Puerto Vallarta, Mexico, September 23-25, 2024, Proceedings},
  series       = {Lecture Notes in Computer Science},
  volume       = {14899},
  pages        = {73--87},
  publisher    = {Springer},
  year         = {2024},
  url          = {https://doi.org/10.1007/978-3-031-72200-4\_6},
  doi          = {10.1007/978-3-031-72200-4\_6},
  bibsource    = {dblp computer science bibliography, https://dblp.org}
}

@inproceedings{Burrows1994ABL,
  author       = {Michael Burrows and
                  David J. Wheeler
                  },
  title        = {A Block-sorting Lossless Data Compression Algorithm},
  booktitle    = {Technical Report 124},
  publisher    = {Digital SRC Research Report},
  year         = {1994},
}

@article{DBLP:journals/tit/ZivL77,
  author       = {Jacob Ziv and
                  Abraham Lempel},
  title        = {A universal algorithm for sequential data compression},
  journal      = {{IEEE} Trans. Inf. Theory},
  volume       = {23},
  number       = {3},
  pages        = {337--343},
  year         = {1977},
  url          = {https://doi.org/10.1109/TIT.1977.1055714},
  doi          = {10.1109/TIT.1977.1055714},
  bibsource    = {dblp computer science bibliography, https://dblp.org}
}

@inproceedings{DBLP:conf/stoc/StorerS78,
  author       = {James A. Storer and
                  Thomas G. Szymanski},
  editor       = {Richard J. Lipton and
                  Walter A. Burkhard and
                  Walter J. Savitch and
                  Emily P. Friedman and
                  Alfred V. Aho},
  title        = {The Macro Model for Data Compression (Extended Abstract)},
  booktitle    = {Proceedings of the 10th Annual {ACM} Symposium on Theory of Computing,
                  May 1-3, 1978, San Diego, California, {USA}},
  pages        = {30--39},
  publisher    = {{ACM}},
  year         = {1978},
  url          = {https://doi.org/10.1145/800133.804329},
  doi          = {10.1145/800133.804329},
  bibsource    = {dblp computer science bibliography, https://dblp.org}
}

@inproceedings{DBLP:conf/stoc/KempaP18,
  author       = {Dominik Kempa and
                  Nicola Prezza},
  editor       = {Ilias Diakonikolas and
                  David Kempe and
                  Monika Henzinger},
  title        = {At the Roots of Dictionary Compression: String Attractors},
  booktitle    = {Proceedings of the 50th Annual {ACM} {SIGACT} Symposium on Theory
                  of Computing, {STOC} 2018, Los Angeles, CA, USA, June 25-29, 2018},
  pages        = {827--840},
  publisher    = {{ACM}},
  year         = {2018},
  url          = {https://doi.org/10.1145/3188745.3188814},
  doi          = {10.1145/3188745.3188814},
  bibsource    = {dblp computer science bibliography, https://dblp.org}
}

@inproceedings{DBLP:conf/latin/KociumakaNP20,
  author       = {Tomasz Kociumaka and
                  Gonzalo Navarro and
                  Nicola Prezza},
  editor       = {Yoshiharu Kohayakawa and
                  Fl{\'{a}}vio Keidi Miyazawa},
  title        = {Towards a Definitive Measure of Repetitiveness},
  booktitle    = {{LATIN} 2020: Theoretical Informatics - 14th Latin American Symposium,
                  S{\~{a}}o Paulo, Brazil, January 5-8, 2021, Proceedings},
  series       = {Lecture Notes in Computer Science},
  volume       = {12118},
  pages        = {207--219},
  publisher    = {Springer},
  year         = {2020},
  url          = {https://doi.org/10.1007/978-3-030-61792-9\_17},
  doi          = {10.1007/978-3-030-61792-9\_17},
  bibsource    = {dblp computer science bibliography, https://dblp.org}
}

@article{DBLP:journals/corr/abs-2407-18753,
  author       = {Davide Cenzato and
                  Lore Depuydt and
                  Travis Gagie and
                  Sung-Hwan Kim and
                  Giovanni Manzini and
                  Francisco Olivares and
                  Nicola Prezza
                  },
  title        = {Suffixient Arrays: a New Efficient Suffix Array Compression Technique},
  journal      = {CoRR},
  volume       = {abs/2407.18753},
  year         = {2024},
  url          = {https://doi.org/10.48550/arXiv.2407.18753},
  doi          = {10.48550/ARXIV.2407.18753},
  eprinttype    = {arXiv},
  eprint       = {2407.18753}
}

@inproceedings{DBLP:conf/focs/Farach97,
  author       = {Martin Farach},
  title        = {Optimal Suffix Tree Construction with Large Alphabets},
  booktitle    = {38th Annual Symposium on Foundations of Computer Science, {FOCS} 1997,
                  Miami Beach, Florida, USA, October 19-22, 1997},
  pages        = {137--143},
  publisher    = {{IEEE} Computer Society},
  year         = {1997},
  url          = {https://doi.org/10.1109/SFCS.1997.646102},
  doi          = {10.1109/SFCS.1997.646102},
  bibsource    = {dblp computer science bibliography, https://dblp.org}
}

@article{DBLP:journals/talg/KociumakaKRRW20,
  author       = {Tomasz Kociumaka and
                  Marcin Kubica and
                  Jakub Radoszewski and
                  Wojciech Rytter and
                  Tomasz Walen},
  title        = {A Linear-Time Algorithm for Seeds Computation},
  journal      = {{ACM} Trans. Algorithms},
  volume       = {16},
  number       = {2},
  pages        = {27:1--27:23},
  year         = {2020},
  url          = {https://doi.org/10.1145/3386369},
  doi          = {10.1145/3386369},
  bibsource    = {dblp computer science bibliography, https://dblp.org}
}

@article{DBLP:journals/tcs/ChlebikC06,
  author       = {Miroslav Chleb{\'{\i}}k and
                  Janka Chleb{\'{\i}}kov{\'{a}}},
  title        = {Complexity of approximating bounded variants of optimization problems},
  journal      = {Theor. Comput. Sci.},
  volume       = {354},
  number       = {3},
  pages        = {320--338},
  year         = {2006},
  url          = {https://doi.org/10.1016/j.tcs.2005.11.029},
  doi          = {10.1016/J.TCS.2005.11.029},
  bibsource    = {dblp computer science bibliography, https://dblp.org}
}

\end{document}